\newcommand{\nicefrac}[2]{\ ^{#1}/_{#2}}
\newtheorem{theorem}{Theorem}[section]
\newtheorem{lemma}[theorem]{Lemma}
\newtheorem{proposition}[theorem]{Proposition}
\newtheorem{corollary}[theorem]{Corollary}
\theoremstyle{definition}
\newtheorem{definition}{Definition}
\newtheorem*{remarks}{Remarks}
\newtheorem*{remark}{Remark}
\newcommand{\bt}{\begin{theorem}}
\newcommand{\et}{\end{theorem}}
\newcommand{\bl}{\begin{lemma}}
\newcommand{\el}{\end{lemma}}
\newcommand{\bc}{\begin{corollary}}
\newcommand{\ec}{\end{corollary}}
\newcommand{\bp}{\begin{proposition}}
\newcommand{\ep}{\end{proposition}}
\newcommand{\bpf}{\begin{proof}}
\newcommand{\epf}{\end{proof}}
\newcommand{\be}{\begin{equation}} 
\newcommand{\ee}{\end{equation}}
\newcommand{\bal}{\begin{align}} 
\newcommand{\eal}{\end{align}}
\newcommand{\beq}{\begin{eqnarray}}
\newcommand{\eeq}{\end{eqnarray}}
\newcommand{\ba}{\begin{array}}
\newcommand{\ea}{\end{array}}
\newcommand{\bma}{\begin{bmatrix}}
\newcommand{\ema}{\end{bmatrix}}
\newcommand{\bi}{\begin{itemize}}
\newcommand{\ei}{\end{itemize}}
\newcommand{\comm}[1]{}
\newcommand{\opnorm}{\@ifstar\@opnorms\@opnorm}
\newcommand{\@opnorms}[1]{%
  \left|\mkern-1.5mu\left|\mkern-1.5mu\left|
   #1
  \right|\mkern-1.5mu\right|\mkern-1.5mu\right|
}
\newcommand{\@opnorm}[2][]{%
  \mathopen{#1|\mkern-1.5mu#1|\mkern-1.5mu#1|}
  #2
  \mathclose{#1|\mkern-1.5mu#1|\mkern-1.5mu#1|}
}
\newcommand{\ph}{\phi}
\newcommand{\e}{\rm e}
\newcommand{\cH}{{\mathcal H}}
\newcommand{\cK}{{\mathcal K}}
\newcommand{\bbC}{{\mathbb C}}
\newcommand{\bbE}{{\mathbb E}}
\newcommand{\bbN}{{\mathbb N}}
\newcommand{\bbP}{{\mathbb P}}
\newcommand{\bbR}{{\mathbb R}}
\newcommand{\bbZ}{{\mathbb Z}}
\newcommand{\fh}{{\mathfrak h}}
\newcommand{\fD}{{\mathfrak D}}
\newcommand{\tr}{  \textrm{tr\ }  }
\newcommand{\wt}{\widetilde}
\newcommand{\wh}{\widehat}
\newcommand{\Ev}[1]{\E \left( #1 \right)}  
\newcommand{\norm}[1]{\left\Vert#1\right\Vert}
\newcommand{\abs}[1]{\left\vert#1\right\vert}
\newcommand{\setb}[2]{\left \{ #1 \ \middle | \ #2 \right \} }
\newcommand{\bra}[1]{\left < #1 \right |}
\newcommand{\ket}[1]{\left | #1 \right >}
\newcommand{\dirac}[3]{\bra{#1} #2 \ket{#3}} 
\newcommand{\diracip}[2]{\left <#1 \middle | #2 \right >}
\newcommand{\bb}[1]{\mathbb{#1}}
\newcommand{\mc}[1]{\mathcal{#1}}
\def\Z{\mathbb Z}
\def\E{\mathbb E}
\def\e{\mathrm e}
\def\im{\mathrm i}
\def\half {\frac{1}{2}}
\def\1{{\mathsf 1}}
\def\tem{\textemdash}
\def\tr{\operatorname{tr}}    
\def\supp{\operatorname{supp}} 
\def\ra{\rightarrow}
\def\bb1{\mathbbm{1}}
\newcommand{\tn}{\textnormal}
\DeclareMathOperator{\rng}{Range}
\begin{document}
\title{Resonant Tunneling in a System with Correlated Pure Point Spectrum}
\author{Rajinder Mavi} 
\email{mavi.maths@gmail.com}
\thanks{Supported by a post-doctoral fellowship from the MSU Institute for Mathematical and Theoretical Physics}
\author{Jeffrey Schenker}
\email{schenke6@msu.edu}
\thanks{Supported by the National Science Foundation under Grant No. 1500386.}
\affiliation{Michigan State University, Department of Mathematics \\  Wells Hall \\ 619 Red Cedar Road \\ East Lansing, MI 48823}
\date{October, 2017; revised October 2018} 

\begin{abstract}We consider resonant tunneling between disorder localized states in a potential energy displaying perfect correlations over large distances.  The phenomenon described here may be of relevance to models exhibiting many-body localization.  Furthermore, in the context of single particle operators, our examples demonstrate that exponential resolvent localization does not imply exponential dynamical localization for random Schr\"odinger operators with correlated potentials.	
\end{abstract}

\maketitle 
 
\section{Introduction}
In this note we consider the nature of Anderson localization in a quantum system with a random potential energy displaying strong correlations over large distances. In multi- and many-body systems, such correlations are typical when the potential is viewed as a function on a Fock space basis, with the metric geometry induced by the allowed hopping terms.  A trivial example of such correlations can be found in a system of identical particles, as states related by exchange of particles have identical potential energy, regardless of the distance between the exchanged particles.  Of course in such situations one is typically interested in indistinguishable particles satisfying Bose or Fermi statistics, for which such exchange does not produce a new state.  Nonetheless, other instances of particle rearrangement may lead to correlations that are physically relevant.  For instance, in recent work of the authors, a system consisting of a tracer particle and a field of Harmonic oscillators is considered.\cite{MS2016} In that work, the potential energy depends on the particle position and the total excitation number in the oscillator fields.  Two physically distinguishable configurations with the same particle position may have identical potential energy even if they have substantially different oscillator configurations.

Our objective here is to consider a simple example of tunneling that may be induced by correlations of this type.  The model system we consider consists of a single spin-$\nicefrac{1}{2}$ tight binding particle on $\bbZ^d$ subject to a random potential energy, weak nearest neighbor hopping and a local term that can flip the spin when the particle visits the origin.  It follows from known results in the literature that this system exhibits Anderson localization, i.e.,  pure point spectrum with eigenfunctions that decay exponentially with respect to position.  However eigenstates, and therefore particle dynamics, are \emph{not} localized in spin.  Indeed, we show below that there are states in which the particle remains well localized at a distance $L$ from the origin as its spin oscillates between up and down with a frequency of order $\e^{\mu L}$.
   
The basic physical reason behind this oscillation is quite simple.  The exact eigenfunctions of the model are symmetric and anti-symmetric with respect to spin flip. Turning on the spin flip term at the origin splits a doubly degenerate eigenvalue associated to eigenfunctions localized at distance $L$ from the origin to a pair of eigenvalues with energy gap $\approx \e^{-\mu L}$.  The size of the splitting is governed by the amplitude of the corresponding eigenvector at the origin, and is therefore substantially suppressed by Anderson localization, but is not typically zero.  Despite the simplicity of this idea, implementing it in the presence of a background of dense point spectrum requires a bit of technical effort. 

Our main interest in this model lies in the fact that the system Green's function, consisting of matrix elements of the resolvent, exhibits stronger decay than the eigenfunctions.  Indeed, the position/spin basis for the Hilbert space of this system is naturally associated to a graph $\Gamma$ consisting of two copies of $\bbZ^d$ with an edge connecting the origins of both lattices. We label a site on this graph by $(x,i)$ with $x\in \bbZ^d$ and $i=\pm 1$, corresponding to spin up and spin down, and denote the corresponding state in Dirac notation by $\ket{x,i}$.  The \emph{graph metric}  on $\Gamma$ is 
\begin{equation}  
	d_\Gamma(x,i;y,j) \ = \ \left\{\begin{matrix} |x-y| &\tn{ if } i = j\\
  1+|x| + |y|  &\tn{ if } i\neq j  \end{matrix}\right. . 
\end{equation}
Here and throughout $|x|$ denotes the $\ell^1$ norm of a vector in $\bbZ^d$, $|(x_1,\ldots,x_d)|=\sum_j |x_j|$.
For example, $d_\Gamma(x,1;x,-1) = 1 + 2 |x|$.  We show below that fractional moments of the Green's function $ G_z(x,i;y,j)  =  \dirac{x,i}{\left (\fh - z \right )^{-1}}{y,j} , $
 with $\fh$ the system hamiltonian, exhibit exponential decay in the metric $d_\Gamma$.  Specifically, we obtain
 \begin{equation}\label{eq:resolvedecay}
 \Ev{ \abs{G_z(x,i;y,j)}^s} \ \le \ A_s \e^{-\mu_s d_\Gamma(x,i ;y,j)}
 \end{equation}
 for $0 < s <1$,
with constants $A_s,\mu_s$ that are independent of $(x,i), (y,j)$ and $z$.  By way of contrast the dynamical correlator $\sup_{t\in \bbR} \abs{\dirac{x,i}{\e^{-\im t \fh}}{y,j}}$ cannot exhibit decay in the metric $d_\Gamma$, as this would contradict the existence of the oscillating spin states we construct.  Instead the dynamical correlator satisfies a localization bound of the form
\begin{equation}
\Ev{\sup_{t \in \bbR} \abs{\dirac{x,i}{\e^{-\im t \fh}}{y,j}}} \ \le \ \e^{-\mu | x - y|}.
\end{equation}
Thus $\sup_{t \in \bbR} \abs{\dirac{x,+1}{\e^{-\im t H}}{x,-1}}$ is of order one, although $\abs{G_z(x,+1;x,-1)}$ is of order $\e^{-\mu |x|}$.
 
We now turn to the specific description of our model. The system Hamiltonian we consider is the following operator on $\ell^2(\Z^d;\bbC^2)$
	\be 
  \fh_g \ = \ H \otimes \mathbbm{1} +g \ket{0}\bra{0}\otimes \sigma^{(1)} 
\ee
where $\mathbbm{1} = \bigl ( \begin{smallmatrix}
	1 & 0 \\ 0 & 1 
\end{smallmatrix} \bigr )$, $\sigma^{(1)}= \bigl ( \begin{smallmatrix}0 & 1\\ 1 & 0
\end{smallmatrix} \bigr )$, and $H$ denotes the Anderson Hamiltonian on $\ell^2(\Z^d)$:
	\begin{equation}\label{anderson}      
		H \ = \ \gamma \Delta + V \ ,
	\end{equation}
with $\Delta$  the discrete Laplacian and $V$ a diagonal matrix with independent, identically distributed random entries distributed according to a measure possessing a bounded density.   Written out in the basis $\Gamma$, we have
\begin{multline} 
\label{andersondoublelattice}
  \fh_g \ = \ \gamma \sum_{\substack{x,y\in \bbZ^d, \\ \|x-y\|_1 = 1 }} \sum_{i=\pm 1}   \ket{x,i}\bra{y,i} \ + \ \sum_{x\in \bbZ^d} \sum_{i=\pm 1} V_x \ket{x,i}\bra{x,i} \\ + \ g ( \ket{0,+1}\bra{0,-1} + \ket{0,-1}\bra{0,1} ).
\end{multline}

It is known that, for sufficiently small $\gamma$, with probability one the Anderson hamiltonian $H$ has pure point spectrum with exponentially localized eigenvectors satisfying the SULE condition.\cite{DelRio1996} That is, each eigenvector $\ket{\psi_j}$ has a \emph{localization center} $\xi_j \in \bbZ^d$ such that 
\be \abs{\diracip{x}{\psi_j}} \ \le \ A_\omega (1+|x|)^\nu \e^{-\mu |x-\xi_j|}, \ee
with $\mu >0$ and $\nu >d$ non-random and $A_\omega$ a random pre-factor that is finite with probability one. Furthermore, the eigenvalues of the Anderson hamiltonian (\ref{anderson}) are simple in the localization regime\cite{Simon1994,JL2000} and, in a suitable scaling limit, the joint process $(\xi_j, E_j)$ of localization centers and eigenvalues becomes a Poisson process.\cite{Killip2007}

The system Hamiltonian \eqref{andersondoublelattice} commutes with the Pauli matrix $\sigma^{(1)}$, which flips the particle spin.  Thus it is possible to choose a basis of eigenstates for $\fh_g$ consisting of functions symmetric or anti-symmetric with respect to spin flip. However, we show below that with probability one the eigenvalues of $\fh_g$ are simple for $g\neq 0$.  Thus, for $g\neq 0$ every eigenfunction is either symmetric or anti-symmetric.  In particular any eigenvector $\ket{\Psi}$ of $\fh_g$ for $g\neq 0$ satisfies
\be \abs{\diracip{x,+1}{\Psi}} = \abs{\diracip{x,-1}{\Psi}} \label{eq:Psiequal},\ee
and does not decay in the metric $d_\Gamma$.  We emphasize that although the Green's function between $\ket{x,+1}$ and $\ket{x,-1}$ is exponentially small,  this decay is not inherited by the eigenfunctions, all of which satisfy eq.\ \eqref{eq:Psiequal}.  Thus (\ref{andersondoublelattice}) provides an example of Green's function decay without corresponding eigenfunction localization.

For $g=0$, the spectrum is doubly degenerate.  Corresponding to an eigenvector $\ket{\psi_j}$ of the Anderson hamiltonian \eqref{anderson} with eigenvalue $E_j$ we have the following symmetric and anti-symmetric eigenvectors of $\fh_0$, with eigenvalue $E_j$:
\be \ket{\psi_j}_+ \ = \ \frac{1}{\sqrt{2}}\left (\ket{\psi_j,+1} + \ket{\psi_j,-1}\right ) \quad \text{and} \quad \ket{\psi_j}_- \ = \ \frac{1}{\sqrt{2}}\left (\ket{\psi_j,+1} -\ket{\psi_j,-1}\right ).\ee
Roughly speaking, we expect for $g\neq 0$ not too large that there correspond to $\psi_j$ two eigenvectors
$ \ket{\psi_j}_\pm^g $ with eigenvalues $E_j^\pm(g)$ satisfying $\abs{E_j^+(g) - E_j^-(g)} \ \le \ \abs{g}^2 \e^{-\mu |\xi_j|}$ where $\xi_j$ is the localization center of $\ket{\psi_j}$.  Furthermore, the eigenvectors $\ket{\psi_j}_\pm^g$ are both centered around $\xi_j$ in the sense that
\be \abs{\diracip{x,\sigma }{\psi_j}_\pm^g} \ \le \ C_\omega (1+|x|)^\nu \e^{-\mu |x-\xi_j|}. \ee
The split levels $\ket{\psi_j}_\pm^g$ form a closely correlated pair which leads to non-trivial dynamics over the time scale $\abs{E_j^+(g) - E_j^-(g)}^{-1}$ for a state prepared in the eigenstate $\ket{\psi_j,+1}$ of the $g = 0$ system.  Our main result is that these properties do indeed hold with probability one, at least for a subsequence of eigenvectors in $\fh_g$.
      
\subsection{Relation to multi-particle models}  
We would like to draw attention to a similarity between the Hamiltonian (\ref{andersondoublelattice}) and the multi-particle Anderson model for \emph{distinguishable particles}. 
We present an analysis of the dynamics of  (\ref{andersondoublelattice}) and we argue that both models exhibit related resonant tunneling behavior. In the multi-particle Anderson model, no such results are rigorously established although the possibility for such tunneling was mentioned in ref.\ \onlinecite{AW2009}.  

The multi-particle Anderson hamiltonian $H^{(N)}$ acts the Hilbert space $\otimes_{i = 1}^N \ell^{2}(\bbZ^d)$ and is given by a sum $H^{(N)} = H_0^{(N)}+ U$, where $H_0^{(N)}=\sum_{i=1}^N H_i$  where $H_i$ denotes the Anderson hamiltonian acting on the $i^{\text{th}}$ particle (corresponding to the $i^{\text{th}}$ factor in $\otimes_{i = 1}^N \ell^{2}(\bbZ^d)$) and $U$ is a symmetric short range interaction among the particles. This operator is symmetric under any permutation of the particles.  Generally, one is interested in a system of identical Bosons or Fermions, which corresponds to restricting $H^{(N)}$ to the symmetric or anti-symmetric subspace of $\otimes_{i = 1}^N \ell^{2}(\bbZ^d)$.  In the weak hopping regime, fractional moment bounds and dynamical localization in terms of a symmetrized metric proper for such indistinguishable particles were obtained in ref.\ \onlinecite{AW2009}. 
 
The phenomenon addressed in this paper, is analogous however to localization for a collection of {\it distinguishable} particles.  For example consider the two particle system at weak hopping.  Consider an initial state $\Psi_0$ which is localized near $(x_0,y_0)$ with $x_0$ and $y_0$ far from each other, corresponding roughly to particles localized at the distant points $x_0$ and $y_0$.  For instance, we could take $\Psi_0(x,y)=\psi_1(x)\psi_2(x)$ with $\psi_1$ and $\psi_2$ eigenstates of the Anderson hamiltonian with localization centers at $x_0$ and $y_0$ respectively. Analogous to the transitions we demonstrate for (\ref{andersondoublelattice}), we expect $\Psi_\tau$ to transition at some large time $\tau$ to a state localized near $(y_0,x_0)$.  Furthermore, as shown in ref.\ \onlinecite{AW2009}, the state will remain localized near $\{(x_0,y_0),(y_0,x_0)\}$ for all time.  Similar transitions are to be expected for systems with $N >2$.  To our knowledge, the nature of such transitions in the multi-particle case has not been studied in the literature. For indistinguishable particles, this analysis suggests that in localized systems it may be hard to distinguish Bosons from Fermions over exponentially long scales.

In a many body setting, one may expect analogous transitions of excitations to occur in infinite dimensional spaces.
In ref.\ \onlinecite{MS2016}, we consider the localized phase of a model  of a tracer particle interacting with a field of oscillators. 
The oscillators have a uniform frequency. Thus without the tracer particle there is infinite degeneracy of the energy levels. Once the tracer particle is added these energy levels split, and we prove a localization result somewhat analogous to known results for the  multi-particle Anderson model. We prove dynamical localization for the tracer particle up to any finite energy, but we are not able to control the  dynamics of the excitations among the oscillators.  The point of the present work is that this may not be a technical issue, since it is possible the oscillators can undergo resonant tunneling mediated by the localized particle.
          
 Our goal in this paper is to develop an understanding for tunneling among nearly degenerate states in the relatively simple model (\ref{andersondoublelattice}). We emphasize that similar dynamical properties in more complicated models, such as those studied in Refs.\ \onlinecite{AW2009,MS2016}, are completely unstudied in the literature.

  \section{Main Results} 
  We formulate our results for a slightly more general version of the operator $\fh_g$, taking
   \begin{equation}   \label{andersonspin}  
                   \fh_g  \ = \ (\gamma \Delta + V  ) \otimes \mathbbm{1}   
                                         +  g  \ket{\zeta}\bra{\zeta} \otimes \sigma^{(1)}
   \end{equation}
    where $\zeta \in \ell^2(\bbZ^d)$ is a fixed function with bounded support. Without loss of generality we assumes $\norm{\zeta}_2 = 1$, since this amounts to a normalization of the coupling $g$. Given $\phi\in \ell^2(\bbZ^d)$ and $i=\pm 1$, we write $\ket{\phi,i}$ for the pure spin state $\sum_x \phi(x) \ket{x,i} = \ket{\phi} \otimes \ket{i}$ with positional wave function $\phi$.  We use the centered Laplacian  
    \[ \dirac{ x,i} {\Delta\otimes \bb1} {y,j} \ = \ \delta_{|x-y|,1}\delta_{i,j}. \]
  The potential $V\otimes \bb1$ is given by 
  \[  \dirac{x}{V\otimes \bb1}{y} \ = \ V_x \delta_{x,y} \delta_{i,j}    \]
  where the random variables  $\{V_x\}_{x \in \bbZ^d}$ are independent and identically distributed.  We further assume that the distribution of $V_x$ has a bounded density $\rho$. 
 We write $\bbP$ for the product distribution of the potential process and $\bbE$ for expectations with respect to $ \bbP $.
 
The starting point for our analysis is the following well known fractional moment bound for the Anderson hamiltonian:
\begin{theorem}[Aizenman and Molchanov\cite{AM94}]\label{AMThm}There is $\gamma_0 >0$ so that for any $|\gamma| < \gamma_0$, $z\in \bbC^+$ and $s <1$     
\begin{equation} \label{fm}
           \bbE\left( \left|\dirac{x}{ (H - z )^{-1}}{ y} \right|^s   \right)   
                                  \ < \ A_{\gamma,s} \e^{-\mu_{\gamma,s}|x - y|} ,
   \end{equation}  
with $\mu_{\gamma,s} >0$ and $A_{\gamma,s} < \infty$. 
\end{theorem}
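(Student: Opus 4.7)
The plan is to apply the Aizenman--Molchanov fractional moment method, which combines a uniform a priori bound on the diagonal Green's function with a one-step resolvent inequality whose multiplicative constant is small when the hopping $\gamma$ is small. Throughout I write $G(x,y;z)=\dirac{x}{(H-z)^{-1}}{y}$, and let $H^{(x)}$ denote the restriction of $H$ to $\ell^2(\bbZ^d\setminus\{x\})$, with associated Green's function $G^{(x)}$.

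First I would establish the a priori bound: for every $s\in(0,1)$ there is $C_s<\infty$ depending only on $s$ and $\|\rho\|_\infty$ such that $\bbE[|G(x,x;z)|^s]\le C_s$ uniformly in $x$ and $z\in\bbC^+$. The Schur/Krein identity gives $G(x,x;z)=(V_x-\xi)^{-1}$ where $\xi=\xi(z,\{V_y\}_{y\neq x})$ is independent of $V_x$ and satisfies $\Im\xi\le 0$. Conditioning on $\{V_y\}_{y\neq x}$, integrating $V_x$ against $\rho$, and using $|v-\xi|\ge|v-\Re\xi|$ together with the integrability of $|t|^{-s}$ near the origin, produces the bound uniformly in $\xi$.

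The next step is the geometric resolvent inequality. Schur-complement at the site $x$ yields, for $x\neq y$,
\[
G(x,y;z) \ = \ -\gamma\,G(x,x;z)\sum_{|u-x|=1} G^{(x)}(u,y;z),
\]
and $G^{(x)}$ is independent of $V_x$. Applying subadditivity of $t\mapsto t^s$ and then taking conditional expectation over $V_x$ first, the a priori bound gives
\[
\bbE\bigl[|G(x,y;z)|^s\bigr] \ \le \ C_s\,|\gamma|^s\sum_{|u-x|=1}\bbE\bigl[|G^{(x)}(u,y;z)|^s\bigr].
\]
To pass from $G^{(x)}$ back to $G$, I would rerun the rank-one argument on the depleted lattice (which still has i.i.d.\ potentials on $\bbZ^d\setminus\{x\}$), which upgrades this to an AM-type recursion
\[
\bbE\bigl[|G(x,y;z)|^s\bigr] \ \le \ K_s|\gamma|^s\sum_{|u-x|=1}\bbE\bigl[|G(u,y;z)|^s\bigr]\qquad(x\neq y),
\]
with $K_s$ depending only on $s$ and $\|\rho\|_\infty$. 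Unfolding this inequality $|x-y|$ times along all nearest-neighbor paths and terminating with the diagonal bound at the endpoint yields $\bbE[|G(x,y;z)|^s]\le C_s(2dK_s|\gamma|^s)^{|x-y|}$, and choosing $\gamma_0$ small enough that $2dK_s|\gamma_0|^s<1$ produces the announced decay with $\mu_{\gamma,s}=-\log(2dK_s|\gamma|^s)$ and $A_{\gamma,s}=C_s$.

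The main obstacle, and the source of most of the technical bookkeeping, is the interplay between $G$ and the family of depleted resolvents $G^{(x)},G^{(x,u)},\ldots$ that arise when one iterates the Schur formula. Each step changes the underlying Hamiltonian, and one must verify that both the a priori bound and the resolvent identity apply uniformly across all such restrictions so that the small factor $|\gamma|^s$ is produced cleanly at every step without the multiplicative constants accumulating in a way that would spoil the geometric sum.
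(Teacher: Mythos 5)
The paper does not supply a proof of Theorem~\ref{AMThm}: it is quoted from Aizenman--Molchanov\cite{AM94}, with the remark that an explicit form appears as Theorem~6.3 of ref.\ \onlinecite{AW2015} (restated as Theorem~\ref{fmsaw} in Appendix~\ref{app1}), and that the versions in those references hold for a fixed $s$; the form stated here, with a single $\gamma_0$ valid for every $s<1$, is then obtained from an ``all-for-one'' lemma (Lem.~B.2 of ref.\ \onlinecite{Aizenman2001}).

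Your outline follows the standard fractional-moment derivation, and for a \emph{fixed} $s$ it is essentially the skeleton of the proof of Theorem~6.3 in ref.\ \onlinecite{AW2015}: a priori bound via the rank-one/Krein formula, one-step depletion identity, conditional integration over $V_x$, then a self-avoiding-walk expansion terminated by the a priori bound. Two things do not work as written. First, ``rerunning the rank-one argument on the depleted lattice'' does not upgrade $\bbE\abs{G^{(x)}(u,y)}^s$ to $\bbE\abs{G(u,y)}^s$: applying the same step to the depleted lattice simply produces a further depletion $G^{(x,u)}$, and the Krein identity $G^{(x)}(u,y)=G(u,y)-G(u,x)G(x,y)/G(x,x)$ involves ratios whose fractional moments you do not control, so there is no clean comparison between the two Green's functions. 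The correct move is either to keep the depleted resolvents throughout the iteration \tem\ the a priori bound holds uniformly over all depletions, which is what actually closes the walk expansion \tem\ or to invoke a genuine Aizenman--Molchanov decoupling lemma, which requires more structure on $\rho$ than mere boundedness. Second, and more substantively, the smallness condition you arrive at, $2dK_s\abs{\gamma}^s<1$, gives a threshold $\gamma_0$ that depends on $s$ and tends to $0$ as $s\uparrow 1$, since $K_s\sim(1-s)^{-1}$; the theorem as stated requires one $\gamma_0$ that works for every $s<1$ simultaneously. This is exactly the gap the paper flags and closes with the all-for-one lemma, and your proposal does not address it, so as written it proves a weaker statement than the one under review.
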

See also ref.\ \onlinecite{AW2015}, Theorem 6.3, where  an explicit bound is given on the right hand side. In both refs.\ \onlinecite{AM94,AW2015}, the theorem is stated with $s$ fixed, however once eq.\ \eqref{fm} holds for $s=s_0$ it can be extended to all $0<s<1$ by an ``all-for-one'' Lemma such as Lem.\ B.2 of ref.\ \onlinecite{Aizenman2001}. 
        
Throughout this paper $\gamma_0$ denotes the critical hopping for the Anderson hamiltonian appearing in Thm.\ \ref{AMThm}.  For $|\gamma|<\gamma_0$, fractional moment bounds for  $\fh_g$ may be inferred almost immediately from  eq.\ (\ref{fm}).
    \begin{theorem} 
     \label{fractional moment} For $|\gamma| <\gamma_0$, $0 < s < 1$, $g\in \bbR$, and $z\in \bbC^+$, 
       \be \label{fm2}      \bbE\left( \left |\dirac{ x,i}{ (\fh_g - z )^{-1} }{ y,j } \right|^s   \right)   \   <  \
                    \wt{A}_{\gamma,s,g}  \e^{-\wt{\mu}_{\gamma,s} d_\Gamma(x,i;y,j) }  ,
      \ee
      with $\wt{A}_{\gamma,s,g} <\infty $ and $\mu_{\gamma,s} >0$ and independent of $g$. 
    \end{theorem}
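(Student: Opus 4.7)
The plan is to exploit the spin-flip symmetry $[\mathbbm{1}\otimes\sigma^{(1)},\fh_g]=0$ so as to reduce the problem to a fractional moment bound for a rank-one perturbation of the Anderson hamiltonian. In the eigenbasis $\ket{\pm}=\frac{1}{\sqrt{2}}(\ket{+1}\pm\ket{-1})$ of $\sigma^{(1)}$ one has the direct sum decomposition $\fh_g=H_+\oplus H_-$, where $H_\pm=H\pm g\ket{\zeta}\bra{\zeta}$ acts on $\ell^2(\bbZ^d)$. Converting back to the spin up/down basis yields
\[
    \dirac{x,i}{(\fh_g-z)^{-1}}{y,j} \ = \ \tfrac{1}{2}\bigl[G_{H_+}(x,y)+ij\,G_{H_-}(x,y)\bigr],
\]
where $G_{H_\pm}(x,y)=\dirac{x}{(H_\pm-z)^{-1}}{y}$ and $ij\in\{+1,-1\}$ records whether the spin indices agree.

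The first main step is to extend Thm \ref{AMThm} to the perturbed operators $H_\pm$, establishing $\bbE|G_{H_\pm}(x,y)|^s\le C_{\gamma,s,g}\e^{-\mu_{\gamma,s}|x-y|}$ for $|\gamma|<\gamma_0$ and $0<s<1$. Since $\pm g\ket{\zeta}\bra{\zeta}$ is deterministic and supported on the finite set $S=\supp\zeta$, the i.i.d.\ structure and bounded density of the diagonal potential are preserved, and the finite-range modification of the hopping inside $S$ only affects prefactors; hence the Aizenman--Molchanov argument carries over with $g$-dependent constants but the same decay rate. An alternative route is to combine Thm \ref{AMThm} with the Krein formula for a rank-one perturbation, handling the denominator $1\pm g\bra{\zeta}(H-z)^{-1}\ket{\zeta}$ by a separate probabilistic estimate that uses the bounded density of $V_x$ for some $x\in S$.

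Given this input, the case $i=j$ of the theorem is immediate from $|a+b|^s\le|a|^s+|b|^s$ together with $d_\Gamma(x,i;y,i)=|x-y|$. For $i=-j$ one must extract the extra decay hidden in the cancellation $\tfrac{1}{2}(G_{H_+}-G_{H_-})$. The resolvent identity $G_{H_+}-G_{H_-}=-2g\,G_{H_+}\ket{\zeta}\bra{\zeta}G_{H_-}$ rewrites the matrix element as
\[
    \dirac{x,i}{(\fh_g-z)^{-1}}{y,-i} \ = \ -g\sum_{u,v\in S}\zeta(u)\overline{\zeta(v)}\,G_{H_+}(x,u)\,G_{H_-}(v,y).
\]
Taking $s$-th moments, using subadditivity over the finite sum, then Cauchy--Schwarz on each product $G_{H_+}(x,u)G_{H_-}(v,y)$ for $s<\tfrac{1}{2}$, and applying the AM bound for $H_\pm$ to each factor produces
\[
    \bbE\bigl|\dirac{x,i}{(\fh_g-z)^{-1}}{y,-i}\bigr|^s \ \le \ C'\e^{-\mu'(|x|+|y|)/2},
\]
after turning $|x-u|$ and $|v-y|$ into $|x|$ and $|y|$ up to $O(1)$ via the boundedness of $S$. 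Because $d_\Gamma(x,i;y,-i)=1+|x|+|y|$, this is exactly the bound required. The range of $s$ is then extended from $(0,\tfrac{1}{2})$ to $(0,1)$ by the all-for-one lemma (Lem.\ B.2 of ref.\ \onlinecite{Aizenman2001}) already invoked for Thm \ref{AMThm}.

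The principal obstacle is the first step: transferring Aizenman--Molchanov from $H$ to $H_\pm$. The quoted Thm \ref{AMThm} is stated only for the bare Anderson hamiltonian, whereas $H_\pm$ carries an additional deterministic rank-one term that is not assumed small. Justifying that the fractional moment machinery tolerates this finite-rank, finite-range modification --- either by a Schur complement that isolates $S$ and reapplies the Aizenman--Molchanov iteration on its complement, or by the Krein-formula reduction to Thm \ref{AMThm} sketched above --- is the technical heart of the argument; the rest of the proof is then bookkeeping in $d_\Gamma$.
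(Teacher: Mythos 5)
Your proposal is correct in outline but follows a genuinely different route than the paper's. You first diagonalize in spin, writing $\fh_g\cong H_{+g}\oplus H_{-g}$ with $H_{\pm g}=H\pm g\ket{\zeta}\bra{\zeta}$, reduce to a fractional-moment bound for these rank-one perturbations of the scalar Anderson Hamiltonian, and then combine: for $i=j$ by triangle inequality, for $i\neq j$ via the second resolvent identity $G_{H_+}-G_{H_-}=-2g\,G_{H_+}\ket{\zeta}\bra{\zeta}G_{H_-}$ together with Cauchy--Schwarz for $s<1/2$. The paper instead works directly with $\fh_g$ and $\fh_0=H\otimes\mathbbm{1}$: a two-step resolvent expansion in $\fD=\ket{\zeta}\bra{\zeta}\otimes\sigma^{(1)}$ expresses $\dirac{x,i}{(\fh_g-z)^{-1}}{y,j}$ as a finite sum of products whose outer factors are unperturbed Green's functions of $H$ (to which Theorem~\ref{AMThm} applies directly) and whose middle factor is handled by the a~priori fractional-moment bound $\Ev{|\dirac{\cdot}{(\fh_g-z)^{-1}}{\cdot}|^s}\le C/(1-s)$; H\"older is then applied with exponent three (so $s<1/3$ suffices before the all-for-one lemma). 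For $i\neq j$ the paper uses a single expansion step, yielding decay only in $|x|$, and recovers the symmetric bound $\e^{-\mu(|x|+|y|)/2}$ by a WLOG reduction to $|x|\ge|y|$. The trade-off: the paper's argument never needs Aizenman--Molchanov for the perturbed $H_{\pm g}$, which is precisely the sub-problem you yourself flag as the "technical heart" of your approach. That sub-problem is closable (by the same expansion-plus-a-priori-bound technique the paper applies at the level of $\fh_g$, since the perturbation is bounded, deterministic and finite rank, and an a~priori bound on $\Ev{|\dirac{x}{(H_{\pm g}-z)^{-1}}{y}|^s}$ still follows from the bounded density of $V_x$), but it amounts to running essentially the same argument twice, once for $H_{\pm g}$ and once to assemble the $\fh_g$ bound, where the paper does it once. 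Your route does have the minor advantage of producing, for $i\neq j$, genuine decay in both $|x|$ and $|y|$ without the WLOG step, and it makes the structural role of the spin symmetry (which the paper records separately in Proposition~\ref{hgbasis}) visible already in the proof of the fractional-moment bound.
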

    For $s<\nicefrac{1}{3}$, Thm.\ \ref{fractional moment} follows from Thm.\ \ref{AMThm} by a straightforward application of the resolvent expansion, H\"older's inequality and boundedness of fractional moments.  It can then be extended to $\nicefrac{1}{3} \le s < 1$ by an ``all-for-one'' Lemma. For completeness we give the proof in Appendix \ref{app1} below.

One virtue of fractional moment bounds such as eq.\ (\ref{fm}) has been their use in deriving various precise
      statements of Anderson localization. In particular, dynamical localization for the Anderson hamiltonian (\ref{anderson}) was first obtained by this method.\cite{Aizenman1994}  Here ``dynamical localization'' refers to a bound of the form
    \begin{equation}
     \label{dynamical localization}
       \bbE \left(  \sup_t   | \langle y |  \e^{- \im t H} | x \rangle | \right) \ < \ C \e^{ - \nu |x - y|},
   \end{equation}
   with $C < \infty$ and $\nu > 0$. This is a strong form of Anderson Localization, which implies spectral localization (pure point spectrum) and  bounds on eigenfunction correlators, see ref.\ \onlinecite{AW2015}, Chapter 7.

Despite the bound eq.\ \eqref{fm2}, no such strong form of Anderson localization holds for $\fh_g$ in the metric $d_\Gamma$. For arbitrarily long times, the best we can do is  obtain a bound that only gives decay in the $\bbZ^d$ directions:
\begin{corollary}
For $|\gamma| <\gamma_0$ and $g\in \bbR$ there are $C$ and $\nu$ depending on $\gamma$ and $g$ such that
\be  \bbE \left(  \sup_t   | \dirac{x,i}{\e^{- \im t \fh_g}}{y,j} | \right) \ < \ C \e^{ - \nu |x - y|}.\ee
\end{corollary}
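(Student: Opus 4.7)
I would reduce the problem to dynamical localization for a rank-one perturbation of the Anderson Hamiltonian on $\ell^2(\bbZ^d)$, by exploiting the spin-flip symmetry of $\fh_g$. The starting observation is the elementary metric inequality
$$
d_\Gamma(x,i;y,j)\ \ge\ |x-y|,
$$
which holds with equality when $i=j$, and reduces to $1+|x|+|y|\ge 1+|x-y|$ when $i\neq j$. Consequently Theorem~\ref{fractional moment} immediately furnishes a fractional moment bound in the Euclidean metric,
$$
\sup_{z\in\bbC^+}\bbE\bigl[|\langle x,i|(\fh_g-z)^{-1}|y,j\rangle|^s\bigr]\ \le\ \wt A_{\gamma,s,g}\,\e^{-\wt\mu_{\gamma,s}|x-y|}.
$$
This weakening of Theorem~\ref{fractional moment} is exactly what is available for the dynamical amplitude, and what the corollary claims; the full $d_\Gamma$-decay of course cannot hold dynamically.

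The next step uses the identity $[\fh_g,\,\mathbbm{1}\otimes\sigma^{(1)}]=0$, which follows from $(\sigma^{(1)})^2=\mathbbm{1}$. Writing $\Pi_\pm=\tfrac{1}{2}(\mathbbm{1}\pm\sigma^{(1)})$ for the two spin-parity projections, a direct computation with $\mathbbm{1}=\Pi_++\Pi_-$ and $\sigma^{(1)}=\Pi_+-\Pi_-$ gives
$$
\fh_g\ =\ H_+\otimes\Pi_+\ +\ H_-\otimes\Pi_-\quad\text{with}\quad H_\pm\ :=\ H\pm g|\zeta\rangle\langle\zeta|,
$$
and hence $\e^{-\im t\fh_g}=\e^{-\im tH_+}\otimes\Pi_++\e^{-\im tH_-}\otimes\Pi_-$. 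Since $|\langle i|\Pi_\sigma|j\rangle|=\tfrac{1}{2}$ for all $i,j,\sigma\in\{\pm 1\}$, this yields the pointwise estimate
$$
|\langle x,i|\e^{-\im t\fh_g}|y,j\rangle|\ \le\ \tfrac{1}{2}\sum_{\sigma=\pm}|\langle x|\e^{-\im tH_\sigma}|y\rangle|,
$$
reducing the corollary to dynamical localization for each scalar operator $H_\pm$ on $\ell^2(\bbZ^d)$.

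Finally, the fractional moment bound for $H_\pm$ transfers directly: inverting the sector decomposition expresses $\langle x|(H_\sigma-z)^{-1}|y\rangle$ as a sum of two matrix elements of $(\fh_g-z)^{-1}$, so taking $s$-th powers ($s\le 1$) and expectations and invoking the first display gives
$$
\sup_{z\in\bbC^+}\bbE\bigl[|\langle x|(H_\sigma-z)^{-1}|y\rangle|^s\bigr]\ \le\ C\,\e^{-\nu|x-y|}.
$$
The standard eigenfunction-correlator argument of ref.~\onlinecite{Aizenman1994} then converts this resolvent bound into $\bbE\bigl[\sup_t|\langle x|\e^{-\im tH_\sigma}|y\rangle|\bigr]\le C\,\e^{-\nu'|x-y|}$, and summing the two sectors completes the proof. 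I do not foresee an essential obstacle: the only feature distinguishing $H_\pm$ from a pure Anderson Hamiltonian is the deterministic, self-adjoint, finite-rank term $\pm g|\zeta\rangle\langle\zeta|$ supported on the bounded set $\supp(\zeta)$, which leaves the randomness (with its bounded density) untouched and is therefore absorbed harmlessly by the fractional-moment/eigenfunction-correlator machinery.
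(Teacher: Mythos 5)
Your proof is correct, and it follows the route the paper clearly has in mind: the paper states this corollary without proof, but Proposition~\ref{hgbasis} establishes precisely the spin-sector decomposition $\fh_g\cong H_g\oplus H_{-g}$ that you rederive, and the standard fractional-moment $\Rightarrow$ eigenfunction-correlator machinery (the paper's Theorem~\ref{fm db 2}, i.e.\ ref.~\onlinecite{AW2015} Thm.~7.7) applied to each sector then yields the claim. Two small remarks. First, there is a mild circularity in how you obtain the fractional-moment bound for $H_\sigma$: you invert the sector decomposition of $(\fh_g-z)^{-1}$ to extract $(H_\sigma-z)^{-1}$, but the cleanest path is the other way around\tem establish the bound for $H_\sigma=H\pm gD$ directly from Theorem~\ref{AMThm} by a single rank-one resolvent expansion (essentially the computation already carried out in the appendix proof of Theorem~\ref{fractional moment}), and then build the $\fh_g$ bound from it. Either order is logically sound. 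Second, working in the reduced spinless sectors is not merely convenient but arguably necessary: if one tried to apply the eigenfunction-correlator argument directly to $\fh_g$ on $\Gamma$, the on-site potentials $V_{x,+1}=V_{x,-1}=V_x$ are perfectly correlated across the two spin sheets, which breaks the independence structure that the correlator bounds are usually stated for. Your reduction to $H_\pm$ on $\ell^2(\bbZ^d)$, where the potential is honestly i.i.d.\ and the extra $\pm gD$ is a deterministic bounded finite-rank term, sidesteps that issue exactly as you say.
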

\noindent However, we can go beyond this and extract a bound on rate of spin flips  for bounded times:
    \begin{corollary}
     \label{fm dynamical bound} For $|\gamma| <\gamma_0$ and $g\in \bbR$ there are $C$ and $\nu$ depending on $\gamma$ and $g$ such that for $i\neq j$
       \be \label{fmdbeqn}     \bbE\left( \sup_{t} \frac{1}{|t|}  \left|\dirac{x,i} {\e^{ - \im t \fh_g}}{  y,j}  \right|    \right)   
              \ < \ C \e^{ - \nu (|x|+|y|)} . \ee
   \end{corollary}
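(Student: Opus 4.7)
The approach is to apply the Duhamel identity to isolate the single spin-flip event that any transition from spin $j$ to spin $i\neq j$ must execute. Splitting $\fh_g=\fh_0+W$ with $\fh_0=H\otimes\bb1$ and $W=g\ket{\zeta}\bra{\zeta}\otimes\sigma^{(1)}$, Duhamel gives
\begin{equation*}
\e^{-\im t\fh_g} \;=\; \e^{-\im t\fh_0} \;-\; \im \int_0^t \e^{-\im(t-s)\fh_g}\,W\,\e^{-\im s\fh_0}\,ds.
\end{equation*}
Since $\fh_0$ preserves the spin quantum number and $i\neq j$, the free term has zero matrix element between $\ket{y,j}$ and $\ket{x,i}$. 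Using $\sigma^{(1)}\ket{j}=\ket{-j}=\ket{i}$ and factoring out the rank-one projection $\ket{\zeta}\bra{\zeta}$ yields the key identity
\begin{equation*}
\dirac{x,i}{\e^{-\im t\fh_g}}{y,j} \;=\; -\im g \int_0^t \dirac{\zeta}{\e^{-\im s H}}{y}\;\dirac{x,i}{\e^{-\im(t-s)\fh_g}}{\zeta,i}\,ds.
\end{equation*}
Every contribution to the off-diagonal spin matrix element thus factors through the support of $\zeta$.

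Next I extract time-independent suprema. Define
\begin{equation*}
A(y) := \sup_{s\in\bbR}\abs{\dirac{\zeta}{\e^{-\im sH}}{y}}, \qquad B(x) := \sup_{s\in\bbR}\abs{\dirac{x,i}{\e^{-\im s\fh_g}}{\zeta,i}}.
\end{equation*}
The identity above gives $\abs{\dirac{x,i}{\e^{-\im t\fh_g}}{y,j}}\leq\abs{g}\,\abs{t}\,A(y)\,B(x)$ for every $t$, so dividing by $\abs{t}$ and taking the supremum in $t$ leaves
\begin{equation*}
\sup_{t}\,\frac{1}{\abs{t}}\,\abs{\dirac{x,i}{\e^{-\im t\fh_g}}{y,j}} \;\leq\; \abs{g}\,A(y)\,B(x).
\end{equation*}
Both random variables are bounded by $\norm{\zeta}_1$, whence $A(y)^2\leq\norm{\zeta}_1 A(y)$ and similarly for $B(x)$, and Cauchy-Schwarz then gives $\bbE[A(y)B(x)]\leq \norm{\zeta}_1\sqrt{\bbE[A(y)]\,\bbE[B(x)]}$.

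It remains to bound the two first moments via dynamical localization. Expanding $\ket{\zeta}=\sum_{z\in\supp\zeta}\zeta(z)\ket{z}$ and applying the triangle inequality inside the supremum and then under the expectation reduces $\bbE[A(y)]$ to $\sum_z\abs{\zeta(z)}\,\bbE[\sup_s\abs{\dirac{z}{\e^{-\im sH}}{y}}]$, and similarly $\bbE[B(x)]$ to $\sum_z\abs{\zeta(z)}\,\bbE[\sup_s\abs{\dirac{x,i}{\e^{-\im s\fh_g}}{z,i}}]$. The first sum is controlled by the eigenfunction-correlator form of dynamical localization for the Anderson hamiltonian, a standard consequence of Theorem \ref{AMThm}, yielding $\bbE[A(y)]\leq C_1\e^{-\nu_1\abs{y}}$ once the finite diameter of $\supp\zeta$ is absorbed into the constant. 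The second sum is controlled by the preceding (unnumbered) corollary, itself a consequence of Theorem \ref{fractional moment}, yielding $\bbE[B(x)]\leq C_2\e^{-\nu_2\abs{x}}$. Substituting into the Cauchy-Schwarz bound gives Corollary \ref{fm dynamical bound} with rate $\tfrac{1}{2}\min(\nu_1,\nu_2)$. The only substantive subtlety is that $A(y)$ and $B(x)$ are correlated random variables --- both depend on the same potential $V$ --- so independence is unavailable; Cauchy-Schwarz is precisely what handles this correlation, at the cost of halving the exponential rate, which is harmless since the statement asserts only the existence of \emph{some} positive $\nu$.
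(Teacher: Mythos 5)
Your proof is correct and uses the same central idea as the paper's: apply the Duhamel (perturbation-of-semigroup) identity to $\fh_g = \fh_0 + g\ket{\zeta}\bra{\zeta}\otimes\sigma^{(1)}$, observe that the free term vanishes because $\fh_0$ conserves spin, and exploit the rank-one structure of the perturbation to factor the integrand through $\zeta$. Where you diverge from the paper is only in how the $(|x|+|y|)$ decay is extracted. The paper's proof discards the factor $\dirac{y,-1}{\e^{-\im(t-s)\fh_g}}{\zeta,-1}$ outright (bounding it by $\norm{\zeta}_2=1$), obtaining decay only in $|x|$; the stated two-sided decay is then recovered implicitly by symmetrizing the roles of $x$ and $y$ and using $\max(|x|,|y|)\ge\tfrac12(|x|+|y|)$. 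You instead retain both factors, define the time-suprema $A(y)$ and $B(x)$, and use the a priori $\ell^\infty$ bound together with Cauchy--Schwarz to handle their correlation, getting decay in $|x|$ and $|y|$ in a single pass at the cost of halving the rate. Both routes are valid and of comparable length; your Cauchy--Schwarz variant has the small advantage of making the $(|x|+|y|)$ decay and the treatment of the correlated factors fully explicit, whereas the paper's version is terser and leaves the symmetrization step to the reader. Your appeal to the unnumbered corollary for $\bbE[B(x)]$ and to eigenfunction-correlator localization for $\bbE[A(y)]$ is exactly how the paper closes the argument (Theorems \ref{fmsaw} and \ref{fm db 2}).
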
 
\noindent In particular, if the particle is initially placed in the state $\ket{x,+1}$, then with probability one
   \be \sum_{y}  \abs{\dirac{y,-1}{\e^{-\im t \fh_g}}{x,1}}^2   \ \le \ C_\omega |t|^2 \e^{-2\nu |x|}, \ee 
   with $C_\omega <\infty$ almost surely.  Thus the time for the spin to flip from $+1$ to $-1$ is bounded below by $\e^{\nu |x|}$.
    
Bounds of the form eq.\ \eqref{fmdbeqn} may be practical for studying split level resonances
    in a variety of Anderson localized systems as they would be easy to obtain in many cases. 
     Indeed, such a bound can be obtained for the model introduced in ref.\ \onlinecite{MS2016}.
On the other hand, for $\fh_g$ we can study the dynamics between split energy levels much more precisely.

Our main theorem is the following.
    Let us define the localization center of a vector $\phi \in \ell^2(\Gamma)$ as a point $x \in \bbZ^d$ so that
     $ |\diracip{ x,-1}{ \phi } | \vee   |\diracip{ x, 1}{ \phi } | $ is maximized. Let $\wh P_x$ denote the projection of $\ell^2(\Gamma)$ onto states with $\bbZ^d$ coordinate within  $ \nicefrac{|x|}{2}$ of $x$,  that is 
          \be \label{spinprojection}
          \wh P_x \ = \ \sum_{ u,j : |u-x| < \nicefrac{|x|}{2}  }  \ket{u , j}\bra{ u , j}   .
          \ee
   We will examine the dynamics of a nearly localized eigenvector with  spin $i=1$.

    \begin{theorem}
     \label{main1}  Suppose that $\abs{\gamma}<\gamma_0$. Then there is an $\epsilon >0$ such that for each $g\in \bbR$, with probability one, there is a sequence of distinct vectors $\phi_k^{(g)} \in \ell^2(\bbZ^d)$ with localization centers $x_k^{(g)} \ra \infty$ and associated energies $\lambda_k^{(g)}$ with the following properties:
     \begin{enumerate}
     \item 	 There is a time scale $\tau_k > \e^{ \epsilon |x_k^{(g)}|}$ so that, for any $n\in \bbZ$ and $t \in \bbR$ satisfying $\left|t - n \tau_k \right| < \e^{ \frac{1}{2} \epsilon  |x_k^{(g)}| }$,  
     \be  \label{main1eq1}
       \left \| \e^{- \im t   \fh_g  }  \ket{  \phi_k^{(g)} ,  1}
          -
        \e^{- \im t \lambda_k^{(g)}} \ket{ \phi_k^{(g)},(-1)^n}\right \|   
             \ < \  \e^{ - \epsilon |x_k^{(g)}|  }
          \ee
\item On the other hand, the support of the wave packet at all times  is almost entirely  contained in  $ \Lambda_{\nicefrac{|x_k^{(g)}|}{2}}(x_k^{(g)}),$
          \be \label{tunnel1}
          \inf_{ t > 0 }   \norm{\wh P_{x_k} \e^{ - \im t \fh_g} \ket{\phi_k^{(g)},  1 } }
                       \ > \ 1 - 
                         \e^{-\epsilon |x_k^{(g)}| }.
           \ee
     \end{enumerate}
              \end{theorem}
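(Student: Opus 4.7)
The strategy is to exploit the exact symmetry $[\fh_g,\mathbbm{1}\otimes\sigma^{(1)}]=0$. Writing $\ket{\pm_s}=(\ket{+1}\pm\ket{-1})/\sqrt 2$ for the eigenvectors of $\sigma^{(1)}$, the Hilbert space decomposes as $(\ell^2(\bbZ^d)\otimes\ket{+_s})\oplus(\ell^2(\bbZ^d)\otimes\ket{-_s})$, on which $\fh_g$ reduces to the rank-one perturbations
\begin{equation}
H_\pm \ = \ H \pm g\ket{\zeta}\bra{\zeta}
\end{equation}
of the Anderson hamiltonian. Since $\ket{\phi,+1}=\tfrac{1}{\sqrt 2}(\ket{\phi}\otimes\ket{+_s}+\ket{\phi}\otimes\ket{-_s})$, the pure spin-up wave packet is split into two halves which evolve independently under $H_+$ and $H_-$. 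The theorem therefore reduces to producing, almost surely, an infinite family of states $\phi_k^{(g)}$ that are simultaneously close to eigenvectors of $H_+$ and $H_-$ with respective eigenvalues $E_k^\pm$ separated by a splitting exponentially small in $|x_k^{(g)}|$.

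\emph{Step 1: selection of a good subsequence.} From Anderson localization I take an orthonormal basis of SULE eigenvectors $\{\psi_j\}$ for $H$ with simple eigenvalues $E_j$ and localization centers $\xi_j$, so $|\langle\psi_j|\zeta\rangle|\le A_\omega(1+|\xi_j|)^\nu \e^{-\mu|\xi_j|}$. I then extract a subsequence $j_k$ with $|\xi_{j_k}|\to\infty$ along which, almost surely,
\begin{equation}
|\langle\psi_{j_k}|\zeta\rangle|^2 \ \ge \ |\xi_{j_k}|^{-C}\e^{-2\mu|\xi_{j_k}|}, \qquad \min_{l\neq j_k}|E_l-E_{j_k}| \ \ge \ |\xi_{j_k}|^{-C'}
\end{equation}
for fixed constants $C,C'$. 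The gap bound uses a Minami/Wegner-type level-spacing estimate combined with a Borel--Cantelli argument along a sparse sequence of length scales. The overlap lower bound is obtained by conditioning on a single random potential value $V_x$ in the support of $\zeta$ and invoking a rank-one resolvent identity to forbid anomalously small amplitude, failure of which forms a summable event.

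\emph{Step 2: rank-one perturbation and two-level dynamics.} Because the splitting is exponentially small while the gap is only polynomially small, standard rank-one perturbation theory (via the Krein-type equation $1\pm gF(z)=0$ with $F(z)=\langle\zeta|(H-z)^{-1}|\zeta\rangle$, decomposed as a pole of residue $|\langle\psi_{j_k}|\zeta\rangle|^2$ at $E_{j_k}$ plus a regular part bounded by the inverse gap) produces unique eigenvalues $E_k^\pm$ of $H_\pm$ near $E_{j_k}$ with
\begin{equation}
E_k^+-E_k^- \ = \ 2g|\langle\psi_{j_k}|\zeta\rangle|^2\bigl(1+o(1)\bigr)
\end{equation}
and eigenvectors $\psi_k^\pm\in\ell^2(\bbZ^d)$ differing from $\psi_{j_k}$ by an exponentially small tail, still SULE-localized around $x_k^{(g)}:=\xi_{j_k}$. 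Setting $\phi_k^{(g)}=(\psi_k^++\psi_k^-)/\|\psi_k^++\psi_k^-\|$, $\bar E_k=\tfrac12(E_k^++E_k^-)$, and $\Delta_k=\tfrac12(E_k^+-E_k^-)$, and decomposing
\begin{equation}
\ket{\phi_k^{(g)},+1} \ = \ \tfrac{1}{\sqrt 2}\bigl(\ket{\psi_k^+}\otimes\ket{+_s}+\ket{\psi_k^-}\otimes\ket{-_s}\bigr)+r_k
\end{equation}
with $\|r_k\|$ exponentially small, direct evaluation on the exact eigenvectors gives
\begin{equation}
\e^{-\im t\fh_g}\ket{\phi_k^{(g)},+1} \ = \ \e^{-\im t\bar E_k}\bigl(\cos(t\Delta_k)\ket{\phi_k^{(g)},+1}-\im\sin(t\Delta_k)\ket{\phi_k^{(g)},-1}\bigr)+O(\|r_k\|).
\end{equation}
Choosing $\tau_k=\pi/(2|\Delta_k|)$, which is bounded below by $\e^{2\mu|x_k^{(g)}|}$ up to polynomial corrections, and $\lambda_k^{(g)}=\bar E_k+\Delta_k=E_k^+$, the identity $\e^{-\im n\tau_k\Delta_k}=(-\im)^n$ matches the two-level phase at $t=n\tau_k$; on the window $|t-n\tau_k|<\e^{\epsilon|x_k^{(g)}|/2}$ the trigonometric factors stay within $O(\e^{(\epsilon/2-2\mu)|x_k^{(g)}|})$ of their values at $n\tau_k$, yielding \eqref{main1eq1} for any sufficiently small $\epsilon>0$. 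The containment \eqref{tunnel1} then follows from the SULE localization of $\psi_k^\pm$ around $x_k^{(g)}$.

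\emph{Main obstacle.} The heart of the argument is Step 1: simultaneously excluding anomalously small gaps and anomalously small overlaps for infinitely many $j$ with probability one. The gap condition is the more delicate of the two, since it requires transferring Minami's estimate (natural in finite boxes) to the infinite-volume eigenvalues attached to the SULE basis and balancing the resulting Borel--Cantelli losses against the growth of $|\xi_{j_k}|$. Once this selection is in hand, Step 2 is essentially bookkeeping on the two-level system $\{E_k^+,E_k^-\}$.
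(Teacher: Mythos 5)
Your reduction to the pair $H_\pm = H \pm g\ket{\zeta}\bra{\zeta}$ on the symmetric/antisymmetric subspaces, and your two-level dynamics calculation in Step 2, both match the paper; the computation producing the $\cos/\sin$ rotation and the choice $\tau_k = \pi/(2|\Delta_k|)$, $\lambda_k^{(g)} = E_k^+$ is correct and is essentially the paper's Proposition \ref{reflect thm}. The containment \eqref{tunnel1} via SULE is also right.

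The genuine gap is in Step 1(b). The condition $\min_{l\neq j_k}|E_l - E_{j_k}| \ge |\xi_{j_k}|^{-C'}$, where the minimum ranges over \emph{all} other eigenvalues of the infinite-volume operator $H$, is impossible to satisfy: in the localized regime $H$ has \emph{dense} pure point spectrum almost surely, so $\inf_{l\neq j}|E_l - E_j| = 0$ for every $j$. Consequently the Krein function $F(z)=\sum_l |\langle\psi_l|\zeta\rangle|^2/(E_l - z)$ has poles accumulating at $E_{j_k}$, and the ``pole plus bounded regular part'' decomposition your Step 2 relies on does not hold literally. What is true is a \emph{local} gap: eigenvalues whose localization centers lie near $\xi_{j_k}$ are polynomially separated from $E_{j_k}$; eigenvalues close to $E_{j_k}$ but with far-away localization centers have tiny residues $|\langle\psi_l|\zeta\rangle|^2$. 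Extracting the perturbation theory from this requires a careful bookkeeping that you have not carried out. The paper circumvents the issue entirely by never applying Krein's formula to the infinite-volume operator. Instead, it restricts to boxes $\Lambda_k^+$, applies Minami's estimate and Borel--Cantelli to get simple spectrum with gap $> L_k^{-2d-1}$ for $H_{\Lambda_k^+}$ along a subsequence of $k$, shows via SULE that the true eigenvectors of $H_g$ (resp.\ $H_{-g}$) with localization centers in $\Lambda_k$ restrict to approximate eigenvectors of $H_{\Lambda_k^+}$, and then matches $\phi_i^{(g)}$ with $\phi_j^{(-g)}$ whenever both are close to the \emph{same} finite-box eigenvector. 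This finite-box intermediary is the missing ingredient that makes your ``transfer Minami to infinite volume'' step rigorous.

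A secondary remark: your overlap lower bound in Step 1(a) is not needed for the theorem. The statement only requires $\tau_k > \e^{\epsilon|x_k^{(g)}|}$, i.e.\ an \emph{upper} bound on the splitting, which comes for free from the SULE upper bound (in the paper it comes directly from the $\e^{-\nu L_k}$-correspondence). Finiteness of $\tau_k$ follows from the almost-sure disjointness $\sigma(H_g)\cap\sigma(H_{-g})=\emptyset$ (Jak\v{s}i\'c--Last), with no quantitative lower bound required. Moreover the lower bound on $|\langle\psi_{j_k}|\zeta\rangle|$ along an a.s.\ infinite subsequence is itself not obviously achievable: the rank-one resolvent identity shows the overlap is nonzero a.s.\ for each fixed $j$, but a uniform polynomial-times-$\e^{-2\mu|\xi_j|}$ lower bound for infinitely many $j$ would need a genuine argument, and you should either supply it or, better, drop the requirement.
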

    \begin{remarks}~
    \begin{enumerate}
    \item The point of eq.\ (\ref{tunnel1}) is that the particle spin flips only by tunneling, as no significant portion of the wave packet is ever near the region connecting the two spins.
    \item In the proof we will choose the vector $\phi_k^{(g)}$ so that $\ket{\phi_k^{(g)},1}+\ket{\phi_k^{(g)},-1}$ is an eigenfunction of $\fh_g$ with eigenvalue $\lambda_k^{(g)}$.  The time scale $\tau_k$ is given by $\pi (\lambda_k^{(g)}-\wt{\lambda}_k^{(g)})^{-1}$ where $\wt{\lambda}_k^{(g)}$ is a nearby eigenvalue associated to an eigenvector that has very large overlap with $\ket{\phi_k^{(g)},1}-\ket{\phi_k^{(g)},-1}$.
    \item Theorem \ref{main1} is proved below, after the statement of Prop.\ \ref{reflect thm} in Section \ref{tunnel}
    \end{enumerate}	
    \end{remarks}

     The rest of the paper is organized as follows. The main Theorem \ref{main1} is proved in Section \ref{r1}, where we show that the model with spin, $\fh_g$,  may be reduced to rank one perturbations of the spinless model, recall the notion of a SULE basis, and state a technical result (Theorem \ref{infinite matching}) that allows us to construct the resonant states described in Theorem \ref{main1}.  The technical Theorem \ref{infinite matching} is proved in Section \ref{minsmatch},  where we recall the Minami estimate and utilize it together with 
           the SULE basis to match vectors after a rank one perturbation.
           In Appendix \ref{app1}  we conclude with short proofs of 
                  Theorem \ref{fractional moment} and Corollary \ref{fm dynamical bound}.

   \section{Proof of the main theorem}\label{r1}
      
\subsection{Rank-one perturbations of $H$} \label{reduction} Let $D=\ket{\zeta}\bra{\zeta}$ and, for $g\in \bbR$, let  
\be \label{r1fam}
            H_g \ := \ \gamma \Delta + V  +  g D,
     \ee
where $H=\gamma \Delta +V$ is the Anderson hamiltonian \eqref{anderson} on $\ell^2(\bbZ^d)$.  In this section we show that $\fh_g \cong H_g \oplus H_{-g}$ and relate spectral properties of $\fh_g$ to spectral properties of $H_{\pm g}$. 

A first observation is that or each $g$ and for  $|\gamma|<\gamma_0$, $H_g$ exhibits Anderson localization, in the sense of having, with probability one, simple pure point spectrum with exponentially decaying eigenfunctions.\cite{JL2000}  Given $g \in \bbR$, let $I^{(g)}$ index the eigenpairs $(\phi_i^{(g)},\lambda_i^{(g)})$
          with normalized eigenvectors $\|\phi_i^{(g)}\|_2 = 1$,
         so that $ (H + gD) \phi^{(g)}_i  = \lambda_i^{(g)} \phi^{(g)}_i  $. 

The eigenvectors and eigenvalues of $\fh_g$ can be determined from the eigenvectors and eigenvalues of $H_{\pm g}$.
        \begin{proposition} \label{hgbasis} Let $\cH_+$ and $\cH_-$ denote the subspaces of $\ell^2(\Gamma)$ symmetric and anti-symmetric with respect to $\sigma^{(1)}$ respectively.  Identify $\cH_\pm$ with $\ell^2(\bbZ^d)$ through the explicit isometries
   \be \label{eq:expisom}\ket{x} \ \mapsto \ \frac{1}{\sqrt{2}} \left ( \ket{x,1} \pm \ket{x,-1} \right ).\ee
   Then $\cH_\pm$ are invariant subspaces for $\fh_g$, and under the identification \eqref{eq:expisom},
   \be\label{eq:Hinv} \left . \fh_g \right |_{\cH_\pm}  \ =  \ H_{\pm g}. \ee  In particular, 
   \begin{enumerate}
   	\item For any $g > 0$ and $\gamma > 0$,  $\sigma(\fh_g) = \sigma(H_{g} ) \cup \sigma(H_{-g})$.
   	\item For $|\gamma| < \gamma_0$, with probability one $\fh_g$ has simple pure point spectrum with eigenvectors exponentially localized in position. Moreover, the eigenvectors of $\fh_g$ are exactly the vectors of the form $ \frac{1}{\sqrt{2}} \left ( \ket{\phi_i^{\pm g},1} \pm \ket{\ph_i^{\pm g},-1} \right ) $ with $i\in I^{(\pm g)}$. 
   \end{enumerate}   
   \end{proposition}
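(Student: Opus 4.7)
My plan is to treat parts (i) and (ii) by direct algebra, and to isolate the only nontrivial probabilistic step, which is cross-sector simplicity. First I would verify the invariance and reduction. The operator $\fh_g$ commutes with $\mathbbm{1}\otimes\sigma^{(1)}$: the first factor $H\otimes\mathbbm{1}$ commutes with anything of the form $\mathbbm{1}\otimes A$, and the second factor involves only $\sigma^{(1)}$ (which commutes with itself) and a positional operator. The subspaces $\cH_\pm$ are precisely the $\pm 1$ eigenspaces of $\mathbbm{1}\otimes\sigma^{(1)}$ and are therefore invariant under $\fh_g$. Under the isometry of eq.~\eqref{eq:expisom}, $H\otimes\mathbbm{1}$ clearly becomes $H$ on each sector, while $g\ket{\zeta}\bra{\zeta}\otimes\sigma^{(1)}$ becomes $\pm g\ket{\zeta}\bra{\zeta}$ on $\cH_\pm$, since the spin vectors $\tfrac{1}{\sqrt{2}}(\ket{1}\pm\ket{-1})$ are eigenvectors of $\sigma^{(1)}$ with eigenvalue $\pm 1$. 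This establishes eq.~\eqref{eq:Hinv}, and the orthogonal decomposition $\ell^2(\Gamma)=\cH_+\oplus\cH_-$ gives $\sigma(\fh_g)=\sigma(H_g)\cup\sigma(H_{-g})$, which is item~(i) of the statement.

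For item~(ii), I would next invoke the standard Anderson localization result for the rank-one family $\{H_g\}$: for $|\gamma|<\gamma_0$, the fractional-moment bound of Thm.~\ref{AMThm} together with the results of ref.~\onlinecite{JL2000} implies that, almost surely, each $H_{\pm g}$ has pure point spectrum, with simple eigenvalues $\{\lambda_i^{(\pm g)}\}_{i\in I^{(\pm g)}}$ and exponentially decaying normalized eigenvectors $\{\phi_i^{(\pm g)}\}$. Pulling these eigenvectors back through the isometry produces the vectors $\tfrac{1}{\sqrt{2}}(\ket{\phi_i^{(\pm g)},1}\pm\ket{\phi_i^{(\pm g)},-1})$ as eigenvectors of $\fh_g$, exponentially localized in the $\bbZ^d$ direction with the same rate as $\phi_i^{(\pm g)}$. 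Together they span $\cH_+\oplus\cH_-=\ell^2(\Gamma)$, so $\fh_g$ has pure point spectrum and the eigenvectors are exactly of the stated form.

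The main obstacle is establishing \emph{simplicity} of $\sigma(\fh_g)$: within each sector $\cH_\pm$ simplicity is already furnished by JL2000, so a degeneracy can only arise from a coincidence $\lambda_i^{(g)}=\lambda_j^{(-g)}$. I would handle this by a conditional spectral-averaging argument. Fix a pair $(i,j)$ and a site $x_0\in\supp(\zeta)$ with $\zeta(x_0)\neq 0$; condition on all potentials $\{V_x\}_{x\neq x_0}$. The rank-one difference $H_g-H_{-g}=2g\ket{\zeta}\bra{\zeta}$ is independent of $V_{x_0}$, and by Hellmann--Feynman
\begin{equation}
\frac{d\lambda_i^{(g)}}{dV_{x_0}}=|\phi_i^{(g)}(x_0)|^2,\qquad \frac{d\lambda_j^{(-g)}}{dV_{x_0}}=|\phi_j^{(-g)}(x_0)|^2,
\end{equation}
so the difference $\lambda_i^{(g)}-\lambda_j^{(-g)}$ is a piecewise-analytic, generically non-constant function of $V_{x_0}$ and therefore vanishes on a Lebesgue-null set. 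Since $V_{x_0}$ has a bounded density $\rho$, this gives $\bbP(\lambda_i^{(g)}=\lambda_j^{(-g)})=0$, and summing over the countable set of pairs yields $\bbP(\sigma(H_g)\cap\sigma(H_{-g})\neq\emptyset)=0$. This completes the almost-sure simplicity. The one subtlety requiring care is the degenerate case where $|\phi_i^{(g)}(x_0)|=|\phi_j^{(-g)}(x_0)|$ identically in $V_{x_0}$ on some interval; this can be excluded either by varying $V_{x_0}$ at several sites in $\supp(\zeta)$ simultaneously or by noting that the eigenfunctions themselves depend non-trivially on $V_{x_0}$, so equality of two distinct eigenfunctions at $x_0$ on an open set is non-generic.
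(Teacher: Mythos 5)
Your algebra for the invariance of $\cH_\pm$, the reduction to $H_{\pm g}$, part~1, and the within-sector structure of part~2 all match the paper's proof, which likewise verifies commutation with $\sigma^{(1)}$ and cites ref.~\onlinecite{JL2000} for simplicity and localization of $H_{\pm g}$.

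The divergence is in cross-sector simplicity, and there your argument has a genuine gap. The paper disposes of it by citing ref.~\onlinecite{JL2000} for $\sigma_{pp}(H_g)\cap\sigma_{pp}(H_{-g})=\emptyset$ almost surely; you instead attempt a direct conditional spectral-averaging proof. The problem is that the point spectra of $H_g$ and $H_{-g}$ can actually coincide on eigenvalues whose eigenvectors are orthogonal to $\zeta$: if $H_g\phi=E\phi$ with $\langle\zeta,\phi\rangle=0$, then $H_{-g}\phi=H_g\phi-2g\langle\zeta,\phi\rangle\zeta=E\phi$, so $E$ is a common eigenvalue for \emph{every} value of $V_{x_0}$. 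In exactly this case one has $\phi_i^{(g)}=\phi_j^{(-g)}$, the Hellmann--Feynman derivatives you compute agree identically, and $\lambda_i^{(g)}-\lambda_j^{(-g)}$ is not ``generically non-constant''—it vanishes identically in $V_{x_0}$. Your closing caveat about ``varying $V_{x_0}$ at several sites'' or ``equality of two distinct eigenfunctions at $x_0$'' does not address this: the offending eigenfunctions are not distinct, and what must be ruled out is that some eigenvector of $H_g$ is orthogonal to $\zeta$ on a set of positive probability. That is precisely a cyclicity statement, which is what ref.~\onlinecite{JL2000} (via Aronszajn--Donoghue theory applied to $H_g=H_{-g}+2g\ket{\zeta}\bra{\zeta}$, together with their cyclicity analysis—or, for $\zeta=\delta_0$, Simon's result ref.~\onlinecite{Simon1994}) supplies, and which the Hellmann--Feynman perturbation in a single $V_{x_0}$ cannot reach. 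A secondary issue is that for an infinite-volume operator with dense pure point spectrum there is no analytic global labeling $\lambda_i^{(g)}(V_{x_0})$, so the ``fix a pair $(i,j)$ and sum over countably many pairs'' reduction is not well posed without further argument. The remedy is to replace your spectral-averaging step with the cyclicity/Aronszajn--Donoghue route used in the paper.
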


\begin{proof}
    This result essentially follows from  the fact that $\fh_g$  commutes with $\sigma^{(1)}$.   Let $\ket{\pm} = \frac{1}{\sqrt{2}} (\ket{1} \pm \ket{-1} )$.  Then $\sigma^{(1)}\ket{\pm} = \pm \ket{\pm} $ and     
    \be \fh_g \ket{\phi,\pm} \ = \ \fh_g \ket{\phi}\otimes \ket{\pm} \ = \ (H_{\pm g} \ket{\phi}) \otimes \ket{\pm} \ = \ \ket{H_{\pm g} \phi,\pm}\ee
    for any $\phi \in \ell^2(\bbZ^d)$.  Thus $\cH = \cH_+ \oplus \cH_-$ with $\cH_\pm = \ell^2(\bbZ^d) \otimes \ket{\pm}$, eq.\ \eqref{eq:Hinv} holds, and part 1 follows. To prove part 2, recall  that for given $g$ and sufficiently small $\gamma > 0$,
  $H_g $ has almost surely simple point spectrum with exponentially decaying eigenvectors.
    Moreover,  $\sigma_{pp} (H_g) \cap \sigma_{pp} (H_{-g}) = \emptyset$ with probability one.\cite{JL2000}\end{proof}         
         
We now recall the known result that the eigenvectors $\{\phi_i^{(g)}\}_{i\in I^{(g)}}$ form a \emph{Semi Uniformly Localized Eigenfunction} (SULE) basis.\cite{DelRio1996} 
For any sufficiently well localized vector $\phi\in \ell^2(\bbZ^d)$, define the \emph{center of localization} to be a site $x\in \bbZ^d$ at which $ x\mapsto| (1+|x|)^{d+1}\phi(x)|$ attains its maximum value.\cite{Note1} 
For $i\in I^{(g)}$,  let $x_i^{(g)} \in \bbZ^d $ denote the center of  localization  of the eigenfunction $\phi_i^{(g)}$.  
\begin{theorem}[SULE basis\cite{DelRio1996}\tem see also ref.\ \onlinecite{AW2015}, Theorem 7.4] For $g\in \bbR$ and  $0 <\abs{\gamma}< \gamma_0$, there is $\xi = \xi(\gamma,g) > 0$  and an event $\Omega_{\mathrm{SULE}}^{(g)} \subset \Omega$ with $\bbP(\Omega_{\mathrm{SULE}}^{(g)}) = 1$ such that for  $\omega \in \Omega_{\mathrm{SULE}}^{(g)}$ there is $A_\omega <\infty$ so that
          \be\label{suledecay}
                   |\phi_i^{(g)}(x)| \ \leq \ A_\omega (1 + |x_i^{(g)}|)^{d+1}  \e^{ -\xi |x - x_i^{(g)}|}
          \ee
     for every $i\in I^{(g)}$.
     \end{theorem}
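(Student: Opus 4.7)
The plan is to derive the SULE property for $H_g$ via the standard chain used in Refs.\ \onlinecite{Aizenman1994,DelRio1996,AW2015}: fractional moment bounds $\Rightarrow$ eigenfunction correlator bounds (SUDEC) $\Rightarrow$ SULE. The first task is to extend Theorem \ref{AMThm} to the rank-one perturbation $H_g = H + gD$. Since $gD$ is bounded and of rank one, Krein's resolvent identity
\be
(H_g - z)^{-1} \ = \ (H - z)^{-1} \ - \ \frac{g\,(H-z)^{-1}\ket{\zeta}\bra{\zeta}(H-z)^{-1}}{1 + g\,\dirac{\zeta}{(H-z)^{-1}}{\zeta}}
\ee
expresses its matrix elements through those of $(H-z)^{-1}$. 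Taking $s$th powers with $s<1$ and using subadditivity $(a+b)^s \le a^s + b^s$ together with H\"older's inequality, the fractional moment bound for $H_g$ reduces to Theorem \ref{AMThm} plus a Simon--Wolff-type estimate $\bbE[|1 + g\,\dirac{\zeta}{(H-z)^{-1}}{\zeta}|^{-s}] \le C_{s,g}$, which follows from the boundedness of $\rho$ after conditioning on $\{V_x : x \in \supp\zeta\}$.

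Next I would pass from fractional moments to the eigenfunction correlator
\be
Q(x,y) \ := \ \sum_{i \in I^{(g)}} |\phi_i^{(g)}(x)\,\phi_i^{(g)}(y)|
\ee
by the standard implication (Ref.\ \onlinecite{AW2015}, Theorem 7.3), obtaining $\bbE[Q(x,y)] \le C e^{-\mu''|x-y|}$ for some $\mu''>0$. Choosing any $0<\xi<\mu''$ and any $a>d$, Markov's inequality gives
\be
\bbP\!\left( Q(x,y) \ge (1+|x|)^a (1+|y|)^a e^{-\xi|x-y|} \right) \ \le \ \frac{C\, e^{-(\mu''-\xi)|x-y|}}{(1+|x|)^a (1+|y|)^a},
\ee
which is summable over $(x,y) \in \bbZ^d \times \bbZ^d$. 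Borel--Cantelli then produces, on an almost sure event $\Omega_{\mathrm{SULE}}^{(g)}$, a finite random constant $B_\omega$ such that $Q(x,y) \le B_\omega (1+|x|)^a (1+|y|)^a e^{-\xi|x-y|}$ for all $x,y$; this is the SUDEC estimate.

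I would then extract SULE from SUDEC as in Ref.\ \onlinecite{DelRio1996}. For each eigenpair, choose $x_i^{(g)}$ to maximize $x \mapsto (1+|x|)^{d+1}|\phi_i^{(g)}(x)|$; the $\ell^2$ normalization combined with
\be
1 \ = \ \sum_x |\phi_i^{(g)}(x)|^2 \ \le \ \big[(1+|x_i^{(g)}|)^{d+1}|\phi_i^{(g)}(x_i^{(g)})|\big]^2 \sum_x (1+|x|)^{-2(d+1)}
\ee
yields $|\phi_i^{(g)}(x_i^{(g)})| \ge c_d (1+|x_i^{(g)}|)^{-(d+1)}$, since $2(d+1)>d$. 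Applying SUDEC at $(x_i^{(g)},y)$ and dividing by this lower bound gives
\be
|\phi_i^{(g)}(y)| \ \le \ c_d^{-1} B_\omega (1+|x_i^{(g)}|)^{a+d+1}(1+|y|)^a e^{-\xi|x_i^{(g)}-y|};
\ee
the residual factor $(1+|y|)^a \le 2^a(1+|x_i^{(g)}|)^a(1+|y-x_i^{(g)}|)^a$ is absorbed into a slightly smaller exponential rate $\xi' < \xi$, producing a bound of the SULE form \eqref{suledecay}.

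I expect the main obstacle to lie in the first step\tem propagating the fractional moment bounds through the rank-one perturbation by $gD$. In the present setting (bounded $\supp\zeta$, bounded density $\rho$) this is known to be routine, but one must verify that the Simon--Wolff bound on $|1 + g\,\dirac{\zeta}{(H-z)^{-1}}{\zeta}|^{-s}$ is uniform in $z \in \bbC^+$, which is handled by a rank-one spectral averaging argument over the finitely many potential values on $\supp\zeta$.
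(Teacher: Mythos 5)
The paper does not prove this theorem---it is invoked as a citation to del Rio--Jitomirskaya--Last--Simon (Ref.\ \onlinecite{DelRio1996}) and to Aizenman--Warzel (Ref.\ \onlinecite{AW2015}, Thm.\ 7.4), so there is no in-paper argument to compare against. Your outline is the standard route (fractional moments $\Rightarrow$ eigenfunction correlator / SUDEC $\Rightarrow$ SULE) that those references use, and the SUDEC-to-SULE step you give---lower-bounding $|\phi_i^{(g)}(x_i^{(g)})|$ from normalization, then dividing the correlator bound by it and absorbing the residual polynomial into a slightly reduced rate---is exactly right in structure.

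Two cautions. First, your Krein's-formula step is less routine than you suggest when $\zeta$ is not a single-site vector: the quantity $\dirac{\zeta}{(H-z)^{-1}}{\zeta}$ does not reduce to a single variable $1/(V_u - a(z))$, so the ``condition on $\supp\zeta$ and spectrally average'' argument has to be run through a Schur complement over the (finite) block $\supp\zeta$, and one must verify that the bound on $\bbE\big[\,|1+g\,\dirac{\zeta}{(H-z)^{-1}}{\zeta}|^{-s}\big]$ so obtained is uniform in $z\in\bbC^+$. A cleaner route, and the one the paper actually follows for the closely related bound on $\fh_g$ in Appendix \ref{app1}, is to iterate the second resolvent identity twice to isolate the scalar $\dirac{\zeta}{(H_g-z)^{-1}}{\zeta}$ in the middle, control it by the \emph{a priori} fractional moment bound \eqref{apriori}, and apply H\"older; this avoids any denominator estimate entirely. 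Second, your final bound carries a prefactor $(1+|x_i^{(g)}|)^{2a+d+1}$ with $a>d$, which is polynomially worse than the $(1+|x_i^{(g)}|)^{d+1}$ stated in the theorem. For the paper's purposes (Lemma \ref{localsum} and its consequences only need \emph{some} polynomial prefactor) this is harmless, but as written your argument proves a slightly weaker statement than the one quoted, and matching the exponent $d+1$ requires the sharper bookkeeping in Ref.\ \onlinecite{DelRio1996}.
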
 
     \begin{remark} Let $\mc{G}\subset \bbR$ be a finite set and let $\Omega_{\mathrm{SULE}}^{(\mc{G})} = \cap_{g\in \mc{G}}\Omega_{\mathrm{SULE}}^{(g)}$. Then $\bbP(\Omega_{\mathrm{SULE}}^{(\mc{G})}) = 1$ and for $\omega \in \Omega_{\mathrm{SULE}}^{(\mc{G})}$, eq.\ \eqref{suledecay} holds for all $g\in \mc{G}$ and $i\in I^{(g)}$.      \end{remark}
     
    SULE implies a limit on the concentration of localization centers. For $\Lambda \subset \bbZ^d$ we define the local index set, 
       \be\label{localindex} 
            I^{(g)}_{\Lambda} \ := \ \left\{  i :  x_i^{(g)}  \in \Lambda  \right\} 
      \ee
     and the corresponding local  spectrum,
     \be\label{localspectrum}
            \Sigma^{(g)}_{\Lambda} \ := \ \left\{  \lambda^{(g)}_i :  i \in   I^{(g)}_{\Lambda}  \right\}.
     \ee
      For $\Lambda \subset \bbZ^{d}$ let us define the projection $P_\Lambda  =   \sum_{x \in \Lambda} |x \rangle \langle x |$ and the local eigenbasis  projection 
     \[     P^{(g)}_\Lambda \ = \ \sum_{i \in I^{(g)}_\Lambda} |\phi^{(g)}_i \rangle \langle \phi^{(g)}_i|.   \]
       In the next section we 
  The following lemma, proved in Section \ref{indices} below, relates the projections  $P_\Lambda$ and $P^{(g)}_\Lambda$.
   \bl  \label{localsum} Let $\xi=\xi(\gamma,g)$ the inverse localization length as above and fix $p>1$ and $\alpha < 1$.  For each $\omega \in \Omega_{\mathrm{SULE}}^{(g)}$ there is a finite $C_\omega = C_\omega(p,\alpha,\xi)$ large enough such that for each $L\geq 1$, $u\in \bbZ^d$ and $\ell \ge  C_\omega + \left (\log (|u|+L) \right )^p$, the following holds. Let $ \Lambda := \Lambda_{L}(u)  $ and   $ \Lambda^{\pm} = \Lambda_{L \pm \ell} (u)$. Then      \be\label{localC}
             \| (1- P_{ \Lambda^{+} } )P_{\Lambda }^{(g)} \|^2 \  <  \    \e^{ -\xi \ell }.   
       \ee
    which implies, 
    \be\label{localub}
              \alpha   \tr P_{\Lambda }^{(g)}    \    <  \ \tr P_{ \Lambda^{+} } .
     \ee
    On the other hand,
      \be\label{localD}
              \|  (1 - P_{\Lambda }^{(g)})  P_{ \Lambda^{-}}\|^2 \ < \   \e^{ - \xi \ell }.   
       \ee
    which implies, 
    \be\label{locallb}
        \alpha    \tr P_{ \Lambda^{-} } \ <  \    \tr P_{\Lambda }^{(g)}     
    \ee
   \el
    \begin{remark} By eq.\ (\ref{localC}),   $ P_{\Lambda}^{(g)} $
     projects almost entirely to
      $ P_{ \Lambda^+}$. Similarly, eq,\ 
      (\ref{localD}) states  that $P_{ \Lambda^-}$
      projects almost entirely to 
      $P_{\Lambda }^{(g)}$.
      These observations lead to the implied statements (\ref{localub}) and (\ref{locallb}), whose proofs are included below. Taking the trace of (\ref{localub}) and (\ref{locallb}), we find  that
  \begin{equation} \label{box concentration}
                                  \alpha    | \Lambda_{L-\ell}(u) |
                        \leq  | I^{(g)}_{\Lambda_{L}(u)  }|  
                        \leq  \alpha^{-1}  | \Lambda_{L+\ell}(u) |.
   \end{equation} 
   \end{remark}

\subsection{Tunneling between corresponding eigenvectors}\label{tunnel}

The key to the proof of Theorem \ref{main1} is to identify eigenfunctions $\ket{\phi_{j_\pm}^{(\pm g)}}$ of $H_{\pm g}$ that have strong overlap, $\abs{\diracip{\phi_{j_+}^{(+g)}}{\phi_{j_-}^{(-g)}}} \approx 1$.  
\begin{definition}   
   We say that an eigenvector $\phi_i^{(g)}$ of  $H +gD$
     $\epsilon$-corresponds to an eigenvector $\phi_j^{(g')}$ of $H + g'D$ if
     \be  \label{vectormatch}  
           | \langle \phi^{(g)}_i , \phi_j^{(g')} \rangle |  \ > \ 1- \epsilon , \ee
    and the corresponding eigenvalues satisfy $|\lambda_i^{(g)} - \lambda_j^{(g')}| < \epsilon$.   
   If furthermore the localization centers $x_i^{(g)}$, $x_j^{(g')}\in \Lambda$ for some box $\Lambda \subset \bbZ^d$, we say that $\phi_i^{(g)}$ and $\phi_j^{(g')}$ are $(\epsilon,\Lambda)$-corresponding. We denote the set of all $(\epsilon,\Lambda)$-corresponding pairs  by $I^{(g,g')}_\Lambda(\epsilon)$.
   For any $\epsilon$-corresponding pair of eigenvectors we may  choose the relative phase so that   
                $ 1-\epsilon \ < \ \langle \phi^{(g)}_i , \phi_j^{(g')} \rangle    \ < \ 1 $.\end{definition}

Our goal is to show for any $g,g'$ that, with probability one, there is a sequence $(\phi_{i_k}^{(g)},\phi_{j_k}^{(g')})$ of $\epsilon_k$-corresponding pairs of eigenvectors with $\epsilon_k\rightarrow 0$. We select sequences $(u_k)_{k= 1}^\infty $ in $\bbZ^d$ and $(L_k)_{k= 1}^\infty $ so that
  \[ \Lambda_k : = \Lambda_{L_k}(u_k)  \]
   are pairwise non intersecting. Any such sequences will do, but for concreteness let $u_0 \in \bbZ^d \setminus \{0\}$ and let $u_k = 3^{k} u_0 $,  $L_k =3^{k - 1} |u_0| $ for $ k \in \bbN$. Note that  
   \be \label{xLbound}  2 L_k \ \le \ |x| \ \le 4 L_k , \quad x\in \Lambda_k.\ee
    By choosing $u_0$ large enough we may assume further that $\supp \zeta \cap \Lambda_k = \emptyset$ for all $k$. 
\begin{theorem}
 \label{infinite matching} Let $(u_k)_k$ and $(L_k)_k$ be as above.  Then there is  $\nu >0$ so that, for any  $g,g'\in \bbR$, there are with probability one infinitely many $k \in \bbN$ for which $I_{\Lambda_k}^{(g,g')}( \e^{ - \nu L_k })$ is non-empty, i.e., there is at least one $(\e^{ - \nu L_k } , \Lambda_k)$-corresponding pair of eigenvectors.
          \end{theorem}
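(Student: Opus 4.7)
The strategy combines three ingredients: the SULE bounds for $H_g$ and $H_{g'}$ from Section \ref{reduction}, which force the rank-one perturbation $(g'-g)D$ to act almost trivially on any eigenfunction with localization center deep inside $\Lambda_k$; the Minami estimate on a buffered box $\Lambda_k^{++}$, which controls the probability of close pairs of eigenvalues of $H_{\Lambda_k^{++}}$; and the first Borel--Cantelli lemma, which bypasses independence concerns across the $\Lambda_k$ once the ``bad'' event has summable probability.

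Fix the enlarged box $\Lambda_k^{++} = \Lambda_{L_k+\ell_k}(u_k)$ with $\ell_k := L_k/4$; the geometric choice of the $u_k$ ensures $\Lambda_k^{++}\cap\supp\zeta = \emptyset$ for all sufficiently large $k$, so in particular $H_{g',\Lambda_k^{++}} = H_{\Lambda_k^{++}}$. For any $i\in I_{\Lambda_k}^{(g)}$, SULE gives $|\langle\zeta,\phi_i^{(g)}\rangle|\le A_\omega e^{-\xi L_k}$ (absorbing polynomial prefactors, using $|x_i^{(g)}|\ge 2 L_k$ from \eqref{xLbound}), hence
\[
\bigl\|(H_{g'}-\lambda_i^{(g)})\phi_i^{(g)}\bigr\| \;=\; |g'-g|\,|\langle\zeta,\phi_i^{(g)}\rangle| \;\le\; A_\omega|g'-g|\,e^{-\xi L_k}.
\]
Expanding $\phi_i^{(g)}$ in the eigenbasis $\{\phi_j^{(g')}\}_j$ of $H_{g'}$, this upgrades to
\[
\sum_j \bigl|\langle\phi_j^{(g')},\phi_i^{(g)}\rangle\bigr|^2 (\lambda_j^{(g')}-\lambda_i^{(g)})^2 \;\le\; C\, e^{-2\xi L_k}.
\]
If the second-closest eigenvalue of $H_{g'}$ to $\lambda_i^{(g)}$ is separated by at least $\delta_k := e^{-\beta L_k}$ with $\beta<\xi$, then the closest eigenvalue $\lambda_{j_0}^{(g')}$ satisfies both $|\lambda_{j_0}^{(g')}-\lambda_i^{(g)}|\le A_\omega|g'-g|e^{-\xi L_k}$ and $|\langle\phi_{j_0}^{(g')},\phi_i^{(g)}\rangle|^2 \ge 1 - C e^{-2(\xi-\beta)L_k}$. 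This yields an $(e^{-\nu L_k},\Lambda_k)$-corresponding pair for any $\nu<2(\xi-\beta)$, once one checks $x_{j_0}^{(g')}\in\Lambda_k$; the latter follows from SULE for $H_{g'}$ together with the fact that $\phi_{j_0}^{(g')}$ concentrates near $x_i^{(g)}\in\Lambda_k$, after harmlessly enlarging $\Lambda_k$ by a $k$-independent amount.

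It therefore suffices to bound the bad event $\cB_k$ that \emph{every} $i\in I_{\Lambda_k}^{(g)}$ has two eigenvalues of $H_{g'}$ within $\delta_k$ of $\lambda_i^{(g)}$. Since each such $\lambda_i^{(g)}$ is automatically within $A_\omega|g'-g|e^{-\xi L_k}\ll \delta_k$ of some $\lambda_j^{(g')}$, $\cB_k$ forces $H_{g'}$ to carry a pair of eigenvalues within $2\delta_k + O(e^{-\xi L_k})$ of each other near $\Lambda_k$. A standard SULE-based comparison transfers this pair to the Dirichlet restriction $H_{g',\Lambda_k^{++}} = H_{\Lambda_k^{++}}$ up to corrections of order $e^{-c\ell_k}$, so
\[
\bbP(\cB_k) \;\le\; \bbP\bigl(H_{\Lambda_k^{++}}\text{ has two eigenvalues within } 4\delta_k\bigr) \;\le\; C\,|\Lambda_k^{++}|^2\,\delta_k \;\le\; C L_k^{2d}\,e^{-\beta L_k},
\]
where the penultimate inequality is the Minami estimate for the Anderson hamiltonian (applied to intervals of length $4\delta_k$ covering a spectral window of $O(1)$). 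The right-hand side is summable in $k$, so the first Borel--Cantelli lemma ensures that only finitely many $k$ are bad. The main obstacle is the quantitative juggling between $\xi$, $\beta$, and the various finite-volume/perturbation errors: $\delta_k$ must be small enough for Minami to be summable, large enough to dominate $e^{-\xi L_k}$ and $e^{-c\ell_k}$, yet not so small that the overlap bound degrades. Since all relevant scales are exponential in $L_k$, any $\beta\in(0,\xi)$ succeeds and yields $\nu = 2(\xi-\beta) > 0$.
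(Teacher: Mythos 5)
Your high-level strategy is genuinely different from the paper's and, if completed, would prove something stronger. The paper uses a polynomially small spacing threshold $L_k^{-2d-1}$, which via Minami gives $\bbP(\text{bad box}) \lesssim L_k^{-1}$ --- not summable, so the authors appeal to the \emph{second} Borel--Cantelli lemma and exploit independence of the disjoint boxes to find \emph{infinitely many} good $k$. You observe that the SULE/rank-one error is exponentially small, so one may afford an exponentially small spacing threshold $\delta_k = e^{-\beta L_k}$ with $\beta < \xi$, making the bad probability summable; the \emph{first} Borel--Cantelli lemma then gives that \emph{all but finitely many} $k$ are good. Your direct comparison of $\phi_i^{(g)}$ against the eigenbasis of the infinite-volume operator $H_{g'}$ via $\|(H_{g'}-\lambda_i^{(g)})\phi_i^{(g)}\| = |g'-g|\,|\langle\zeta,\phi_i^{(g)}\rangle|$ is also cleaner than the paper's route through the finite-box operator.

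However, the transfer step from $H_{g'}$ to $H_{\Lambda_k^{++}}$ is a genuine gap. You claim that if $H_{g'}$ has two eigenvalues within $\delta_k$ of $\lambda_i^{(g)}$, then ``a standard SULE-based comparison transfers this pair to $H_{\Lambda_k^{++}}$.'' This only works if \emph{both} eigenvectors of $H_{g'}$ in the close pair are localized inside $\Lambda_k^{++}$: the closest one, $\phi_{j_0}^{(g')}$, must be (it carries most of the mass of $\phi_i^{(g)}$), but the second-closest eigenvalue of $H_{g'}$ to $\lambda_i^{(g)}$ can perfectly well come from an eigenvector localized arbitrarily far from $\Lambda_k$. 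Such an eigenvector has exponentially small restriction to $\Lambda_k^{++}$ and produces \emph{no} nearby eigenvalue of $H_{\Lambda_k^{++}}$, so Minami (a finite-volume bound) cannot be invoked. Closing the gap requires splitting $\sum_{j\neq j_0}|\langle\phi_j^{(g')},\phi_i^{(g)}\rangle|^2$ into local indices (those with $x_j^{(g')}$ near $\Lambda_k$) and distant ones, controlling the distant contribution by SULE for $H_{g'}$ and the concentration estimates of Lemma~\ref{localsum}, and controlling the local contribution by first matching the local spectrum of $H_{g'}$ to $\sigma(H_{\Lambda_k^{++}})$ --- but that last step is precisely the content of the paper's Lemma~\ref{matching}. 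The cleanest repair of your proof is to define the bad event directly as $\{\Delta_{\min}[\sigma(H_{\Lambda_k^{++}})] < \delta_k\}$, bound it by Minami, and then on the good event compare both $\phi_i^{(g)}$ and the matching eigenvector of $H_{g'}$ to the \emph{same} eigenvector of $H_{\Lambda_k^{++}}$, as the paper does; your exponentially small $\delta_k$ and first Borel--Cantelli would then give the improved ``all but finitely many $k$'' conclusion.
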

 \begin{remark} The proof will show that $|I_{\Lambda_k}^{(g,g')}( \e^{ - \nu L_k })| \ge c |\Lambda_k|$.  However, for the proof of Theorem \ref{main1} we need only one $(\e^{ - \nu L_k } , \Lambda_k)$-corresponding pair.
 \end{remark}

Theorem \ref{infinite matching} is proved below in Section \ref{match}  We close this section by  utilizing the description of $\epsilon$-corresponding pairs of $H_g$ and $H_{-g}$ to analyze the 
    evolution operator  $e^{-i t \fh_g}$ on selected spin-up states and thereby complete the proof of Theorem \ref{main1}. By Proposition \ref{hgbasis},   the normalized eigenvectors of $\fh_g$ are exactly the vectors
    \be  \label{hgbasis2}
       \ket{\psi_{j}^{\pm}}   =       \frac{1}{\sqrt 2}
                          \left(   \left|\phi_{j }^{(\pm g)} ,1 \right\rangle  
                          \pm \left| \phi_{j }^{( \pm g)}  , -1\right\rangle \right)
      \ee
    for $j  \in I^{(\pm g)}$.  The evolution of certain states of the form $\ket{\psi_j^{(g)},1}$ turns out to be simple, but non-trivial, as shown in the following

     \bp \label{reflect thm} Let $j_\pm \in I^{(\pm g)}$ be the indices of an
     $\epsilon$-corresponding pair of eigenvectors. Then for all $t \ge 0$,
       \begin{equation} \label{evol1}  
        \norm{ 
      \e^{- \im t\fh_g } \ket { \phi_{j_+}^{(g)} , 1} 
        - \frac{  1 }{\sqrt 2}   
        \left(\e^{- \im t    \lambda_{j_+ }^{(g)} } \bb1
     + \e^{- \im t \lambda_{  j_- }^{(-g)} } \sigma^{(3)}
      \right) \ket{ \psi_{ j_+ }^+}   }
         \ < \ 4 \epsilon
          \end{equation}
          where $\sigma^{(3)} = \begin{psmallmatrix} 1 & 0 \\ 0 & -1 \end{psmallmatrix}$, acting on the spin coordinate.  Let $\tau = \nicefrac{\pi }{\abs{\lambda_{j+}^{(g)} -\lambda_{j-}^{(-g)}}}$.  It follows that for $n\in \bbZ$, $t \in \bbR$, and $0<\delta <1$ satisfying $\left|t - n  \tau \right| < \delta \tau $,  we have
     \be \label{evol4}  
            \norm { \e^{- \im t   \fh_g  }  \ket{  \phi_{j_+}^{(g)} ,  +1 }
             -
               \e^{- \im t \lambda_{j_+ }^{(g)}} \ket{ \phi_{j_+}^{(g)} , (-1)^n}}   
                \ < \     4( \delta +  \epsilon).
          \ee
     \ep

Let us now see how the results given so far imply the main Theorem \ref{main1}.  Theorem \ref{infinite matching} guarantees an infinite sequence of $(\e^{-\nu L_k},\Lambda_k)$-corresponding pairs of eigenvectors for $H^{(g)}$ and $H^{(-g)}$.  For each pair let $\tau_k = \nicefrac{\pi}{\lambda_{i_k}^{(g)}-\lambda_{j_k}^{(-g)}} \ge \e^{\nu L_k} $.  Applying Theorem \ref{reflect thm} to the $k$-th pair with $\epsilon = \e^{-\nu L_k}$ and $\delta = \e^{-\frac12 \nu L_k}$ yields, for large $k$,
        \[ \norm{ \e^{- \im t   \fh_g  }  \ket{  \phi_{i_k} ^{(g)} ,  1}
          -
        \e^{- \im t \lambda_{i_k}^{(g)}} \ket{ \phi_{i_k}^{(g)},(-1)^n}}
             \ < \   \e^{- \frac14 \nu L_k}  \]
     whenever $\abs{t-n\tau_k} < \e^{\frac{1}{2}\nu L_k}.$  Part 1 of Theorem \ref{main1} follows since $\frac14 |x_{i_k}^{(g)}| \le L_k$ by eq.\ \eqref{xLbound}.  The localization statement in part 2, follows from a combination of   (\ref{localC}) and (\ref{evol1}). By (\ref{evol1}), we have
          \[ \norm{ \left (\bb1 - \wh P_{x_{i_k}^{(g)}} \right )\e^{-\im t \fh_g} \ket{\phi_{i_k}^{(g)},+1}} \ \le \ \norm{\left (\bb1 - \wh P_{x_{i_k}^{(g)}}  \right ) \ket{\psi_{i_k}^+}} \ + \ 4 \e^{-\nu L_k}.\]
       By (\ref{localC}), for large $k$ we have 
        \[   \norm{\left (\bb1 - \wh P_{x_{i_k}^{(g)}}  \right ) \ket{\psi_{i_k}^+}} 
                           \ < \ e^{ - \frac12 \xi L_k   }.  \]
	Theorem \ref{main1} follows.
	
	Now we turn to the

     \begin{proof}[Proof of Proposition \ref{reflect thm}.] To begin, let us consider the expansion of $\ket{  \phi_{j_+}^{(g)} , 1}$ in the eigenbasis of $\fh_g$. 
     For $j \in I^{(-g)}$ let
         $ a_{j }  =    \diracip{    \phi^{(-g)}_{j}      }{  \phi^{( g)}_{j_+} }    $.
      Using  eq.\ (\ref{hgbasis2}) we have 
       \be \label{psij+exp}   \sigma^{(3)} \ket{\psi_{j_+}^+} \ = \ \frac1{\sqrt 2} \left(   \ket{\phi_{j_+ }^{( g)} ,+1 }  
                           - \ket{\phi_{j_+ }^{(  g)}  , - 1} \right)
                             \ = \ \sum_{j \in I^{(-g)} } a_{j}  \ket{\psi_{j}^-}   .   \ee
       Thus, we may readily represent the spin up state as
        \[ \ket{ \phi_{j_+}^{(g)} , +1 } \ = \ \frac{1}{\sqrt{2}} \left ( \ket{\psi_{j_+}^+} +\sigma^{(3)}\ket{\psi_{j_+}^+} \right ) 
                      \   =  \
                \frac{1}{\sqrt 2}   \ket{\psi_{ j_+ }^+}   
                              +
                       \frac1{\sqrt 2}\sum_{j \in I^{(- g)}    }   
                           a_{j }   \ket{\psi_{j }^{-} }.
              \]
        Applying the evolution operator to this expansion, we obtain
        \be \label{epairexp} \e^{-\im t \fh_g} \ket{ \phi_{j_+}^{(g)} , +1 } \ = \ \frac{1}{\sqrt{2}}\e^{-\im t \lambda_{j_+}^{(g)}} \ket{\psi_{j_+}^+} + \frac1{\sqrt{2}} \sum_{j\in I^{(-g)}} a_j \e^{-\im t \lambda_{j}^{(-g)}}\ket{\psi_j^-}.\ee 
        Using \eqref{psij+exp} again, we obtain
 \begin{multline*} \e^{- \im t   \fh_g   }  \ket{  \phi_{j_+}^{(g)} , +1 }  - \frac{  1 }{\sqrt 2}   
        \left(\e^{- \im t    \lambda_{j_+ }^{(g)} } \bb1
     + \e^{- \im t \lambda_{  j_- }^{(-g)} } \sigma^{(3)} 
      \right)
       \ket{ \psi_{ j_+ }^+}
                     \\  = \           \frac1{\sqrt 2}     \sum_{j\in I^{(-g)} }  a_j   
                                  \left( \e^{-\im t \lambda_{j}^{(-g)}}  - \e^{-\im t \lambda_{j_-}^{(-g)}} \right)    \ket{\psi_{j}^{-}} .     \end{multline*}
  Eq.\ \eqref{evol1} follows since 
    \begin{multline*}  \norm{ \frac1{\sqrt 2}     \sum_{j \in I^{(-g)}}    
                       a_j \left( \e^{-\im t \lambda_{j}^{(-g)}}  - \e^{-\im t \lambda_{j_-}^{(-g)}} \right)   \ket{\psi_{j}^{-}} }^2 
                \ \leq \  2  \sum_{j \in I^{(-g)}  \setminus \{j_-\}}    | a_{j}  |^2  
                  \\
                  \leq \ 2 \left ( 1 - \abs{ \diracip{\phi_{j-}^{(-g)}}{\phi_{j+}^{(g)}}}^2 \right )\
                    \leq \  \ 4 \epsilon , \end{multline*}
	by the eigenvector correspondence assumption.
	
	To prove eq.\ \eqref{evol4}, note that for any $t$ and $n$,
      \[ 
      \left( \e^{- \im t \lambda_{ j_+ }^{(g)}} \pm  \e^{- \im t  \lambda_{ j_-}^{(-g)}  } \right) 
      \ = \
        \e^{ -\im t\lambda_{ j_+}^{(g)}       } \left( 1 \pm  \e^{ \im t(\lambda_{ j_+}^{(g)} - \lambda_{ j_-}^{(-g)} ) }  \right) 
        =
      \e^{- \im t \lambda_{ j_+ }^{(g)}}
        \left( 1\pm  \e^{ \im n \pi}  \right) 
        \mp 
        \left( \e^{\im n \pi} -
        \e^{ \im t(\lambda_{ j_+}^{(g)} - \lambda_{ j_-}^{(-g)} ) }
         \right)  .\] 
        For $t $ satisfying $\left|t - n \tau\right| < \delta \tau$, we have
        \[ \abs{ \left( \e^{- \im t \lambda_{ j_+}^{(g)}} \pm  \e^{- \im t  \lambda_{ j_-}^{(-g)}  } \right) - \e^{- \im t \lambda_{ j_+ }^{(g)}}
        \left(1\pm  \e^{ \im n \pi}   \right) } \ \le \ \abs{1 - \e^{\im \pi \left ( \frac{t}{\tau} - n \right )}} \ \le \ \pi \delta.\]
     Thus, we have the approximation
     \be
     \norm{ \frac1{\sqrt{2}}
     \left( \e^{- \im t    \lambda_
                { j_+ }^{(g)} } \bb 1
      +   
      \e^{- \im t \lambda_{ j_- }^{(-g)} } \sigma^{(3)} \right)  -
     \e^{-\im t\lambda_{j_+}^{(g)}}
     \frac1{\sqrt{2}}\begin{pmatrix} 1+(-1)^n  & 0 \\ 0 & 1- (-1)^n  \end{pmatrix}
      }   \ < \ 4 \delta .
          \ee
       Combining this with (\ref{evol1})   we obtain (\ref{evol4}).
     \end{proof}

 \section{Proof Theorem \ref{infinite matching}}\label{minsmatch}
   \subsection{Concentration of localization centers}\label{indices}    
   In this section we prove Lemma \ref{localsum}. The concentration inequalities
           (\ref{locallb}) and (\ref{localub}) follow immediately from estimates (\ref{localC}) and (\ref{localD}).  We will prove (\ref{locallb}).  The proof of (\ref{localub}) is similar, but with the roles of the projections interchanged.  We have $\tr P_\Lambda^{(g)} = \tr  P_{\Lambda^{+}} P_\Lambda^{(g)}   + \tr  (1- P_{\Lambda^{+}}) P_\Lambda^{(g)}. $ Furthermore,
          \[ \tr  (1-P_{\Lambda^{+}}) P_\Lambda^{(g)} \ =\ \tr  (1- P_{\Lambda^{+}}) P_\Lambda^{(g)}P_\Lambda^{(g)} \ \leq \ \|(1-P_{\Lambda^{+}}) P_\Lambda^{(g)}\| \tr P_\Lambda^{(g)}.\]
          Thus, 
           \be\label{tracebound}
                    \left( 1 -\e^{-\nicefrac{\xi \ell}{2}}  \right)  \tr P_\Lambda^{(g)}  \ \leq \ \tr P_{\Lambda^{+}} P_\Lambda^{(g)}  \  \leq \ \tr P_{\Lambda^{+}} . 
              \ee
              Therefore, (\ref{localub}) holds provided $C_\omega$ is large enough that $- \log (1-\alpha) \le \frac{\xi}{2} C_\omega$.

          We now prove (\ref{localC}).  Let $i \in I^{(g)}_\Lambda$, a state with localization center $x =x_i \in \Lambda$.
           Then,
           \[\| (1 - P_{\Lambda^{+}}) P^{(g)}_\Lambda |\phi_i^{(g)}\rangle  \|^2
                                            \   = \ \sum_{ y  \notin  \Lambda^+}  |\phi_i^{(g)}( y )|^2
                                     \ \leq \   \sum_{y: |u - x_i | > \ell}   | \phi^{(g)}_i(y)   |^2.  \]
           We use the SULE bound 
           (\ref{suledecay}) to estimate the sum, 
         \begin{multline*} \| (1 - P_{\Lambda^{+}}) P^{(g)}_\Lambda |\phi_i^{(g)}\rangle  \|^2
                          \ \leq  \ \sum_{k = \ell}^\infty  C_d     A_\omega^2 (1 + |x_i|)^{2d+2}  k^{d  - 1}    \e^{-2 \xi  k}  
                          \\ \leq  \  \left [ C_d A_\omega^2 (1 + |u|+L)^{2d+2} \sum_{k \ge  C_\omega +\log^p ( |u|+ L) } k^{d-1} \e^{-\xi k}\right ] \e^{-\xi \ell} \ \le \   \e^{ - \xi  \ell},  \end{multline*}
          where the final inequality holds provided $ C_\omega  $ large enough.

         Now we prove (\ref{localD}). Let $x \in \Lambda^{-}$. Then 
      \[
              \|    (1 - P_{\Lambda }^{(g)}) P_{\Lambda^{-}} |x \rangle  \|^2  \ = \   \sum_{i \notin I_\Lambda^{(g)}}  |\phi_i^{(g)}(x)|^2.
       \]
       We group the sum on the right hand side by summing, for each $y\notin \Lambda$, over the eigenfunctions with localization centers at $x_i=y$,
       \[  \sum_{i \notin I_\Lambda^{(g)}}  |\phi_i^{(g)}(x)|^2 
                                 \ = \ 
                                 \sum_{y \notin \Lambda }  \sum_{i \in I^{(g)}_{ \{y\}}  } |\phi_i^{(g)}(x)|^2 .  \]  
       From (\ref{localub}) the number of states with localization center at site $y$ 
                 is bounded by  
       \[ \alpha^{-1} |\Lambda_{C_\omega + \log^p |y|  }(y)| \leq  \alpha^{-1} C_d  \left ( C_\omega + \log^p |y|\right )^{d}  .\]     
       Thus, using the SULE bound \eqref{suledecay}, we have,
         \[
                            \sum_{i \notin I^{(g) }_\Lambda    }   | \phi^{(g)}_i(x)  |^2     
                       \   \leq \  \alpha^{-1} C_d A_\omega^2
                           \sum_{y : |y -x| \geq  \ell  } 
               \left (C_\omega + \log^p |y| \right )^{d}    
                   (1 + |y|)^{2 d + 2}  \e^{-2 \xi |x - y| }   . 
            \]
        Recalling that $x\in \Lambda^- = \Lambda_{L-\ell}(u)$, we see that $|y| \le |y-x| + L + |u|$, so that 
        \[ \sum_{i \notin I^{(g) }_\Lambda    }   | \phi^{(g)}_i(x)  |^2     
                       \   \leq \ \alpha^{-1} C_d A_\omega^2
                           \sum_{k \ge C_\omega + \log^p(L+|u|)}^\infty 
               \left ( C_\omega + \log^p ( L + |u| +k) \right )^{d}    
                   (1 + L + |u| +k)^{2 d + 2}  \e^{-2 \xi k}. \]
        Clearly we may choose $C_\omega$ large enough that  $ \sum_{i \notin I^{(g) }_\Lambda    }   | \phi^{(g)}_i(x)  |^2  <  \e^{- \xi \ell } $, which implies (\ref{localD}).
This completes the proof of the lemma.

   \subsection{Minami Estimate}\label{mins}
   The Minami estimate\cite{M1996} for the Anderson model $H = \gamma \Delta + V$ is an \emph{a priori} bound on the eigenvalue correlations for $H$ restricted to a box $\Lambda$. For an operator $A$ on $\ell^2(\bbZ^d)$ and a subset $\Lambda\subset \bbZ^d$, let $A_\Lambda = P_\Lambda A P_\Lambda $, the restriction of $A$ to $\Lambda$ ``with Dirichlet boundary conditions.'' If $\Lambda \cap \supp \zeta = \emptyset$, we have  $(H+gD)_\Lambda = H_\Lambda$. As we will consider only boxes disjoint from the support of  $\zeta$,
       we will simply use $H_\Lambda$ below without further comment.
       
       Minami's estimate\cite{M1996} is the following
          \begin{theorem}\label{minami}
       For any interval $J \subset \bbR$ and subset $ \Lambda \subset \bbZ^d $,
      \[   \bbP\left( \tr P_{H_\Lambda  }(J) \geq 2  \right) 
               \ \leq \ \frac{\pi^2}{2} \left(  \|\rho\|_\infty | J |   |\Lambda|  \right)^2. \]
   \end{theorem}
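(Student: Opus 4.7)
The plan is to reproduce Minami's original proof\cite{M1996}, which proceeds in three stages: (i) a Chebyshev-type reduction to the expected number of eigenvalue pairs in $J$, (ii) an algebraic identity rewriting this count as a sum of $2 \times 2$ determinants of principal minors of the spectral projection, and (iii) a rank-two spectral averaging estimate.

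Write $N_J := \tr P_{H_\Lambda}(J)$. Because $N_J \in \{0,1,2,\ldots\}$ and $\binom{N}{2} \geq \mathbf{1}[N \geq 2]$, one has $\bbP(N_J \geq 2) \leq \tfrac{1}{2}\bbE[N_J(N_J-1)]$. Expanding in the eigenbasis of $H_\Lambda$ and using that $P_{H_\Lambda}(J)$ is a projection (so $\tr P_{H_\Lambda}(J) = \tr P_{H_\Lambda}(J)^2$), a direct computation gives the algebraic identity
\begin{equation}\label{eq:algid}
N_J(N_J-1) \ = \ \sum_{x \neq y \in \Lambda}\Big[\langle x|P_{H_\Lambda}(J)|x\rangle\langle y|P_{H_\Lambda}(J)|y\rangle - |\langle x|P_{H_\Lambda}(J)|y\rangle|^2\Big],
\end{equation}
each summand of which is the determinant of the $\{x,y\}$ principal minor of $P_{H_\Lambda}(J)$, and hence non-negative since $P_{H_\Lambda}(J) \geq 0$.

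The heart of the argument is the rank-two spectral averaging bound: for fixed $\{V_z\}_{z \neq x,y}$,
\begin{equation}\label{eq:rank2}
\int_\bbR\int_\bbR \rho(V_x)\rho(V_y)\,\det\bigl(P_{H_\Lambda}(J)\big|_{\{x,y\}}\bigr)\, dV_x\, dV_y \ \leq \ \pi^2\,\|\rho\|_\infty^2\,|J|^2.
\end{equation}
To prove \eqref{eq:rank2} one uses the Poisson representation $P_{H_\Lambda}(J) = \lim_{\eta\downarrow 0}\tfrac{1}{\pi}\int_J \mathrm{Im}(H_\Lambda - E - i\eta)^{-1} dE$, and applies a Krein/Schur-complement analysis to the rank-two perturbation $V_x|x\rangle\langle x| + V_y|y\rangle\langle y|$. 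The $\{x,y\}$-minor of $\mathrm{Im}(H_\Lambda-E-i\eta)^{-1}$ can then be identified, in the $\eta \downarrow 0$ limit, as the Jacobian of the map sending $(V_x,V_y)$ to the pair of eigenvalues of $H_\Lambda$ responsible for the spectral mass near $E$; integrating this Jacobian against $\rho\otimes\rho$ and restricting to $J \times J$ yields \eqref{eq:rank2}. Substituting into \eqref{eq:algid}, bounding the number of ordered pairs by $|\Lambda|^2$, and applying the Chebyshev step gives $\bbP(N_J \ge 2) \leq \tfrac{\pi^2}{2}(\|\rho\|_\infty|J||\Lambda|)^2$.

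The main obstacle is \eqref{eq:rank2}. Two separate applications of the standard Wegner estimate $\int \rho(V_x)\langle x|P_{H_\Lambda}(J)|x\rangle dV_x \leq \|\rho\|_\infty |J|$ would only give one factor of $|J|$, because after the first integration the remaining factor $\langle y|P_{H_\Lambda}(J)|y\rangle$ has to be bounded trivially by $1$ before integrating $V_y$. The second factor of $|J|$ genuinely requires the determinant structure, which vanishes whenever $|x\rangle$ and $|y\rangle$ overlap a common eigenvector and encodes the suppression of simultaneous resonances of two distinct eigenvalues in $J$; making this precise is the essential content of Minami's rank-two lemma.
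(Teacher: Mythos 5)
The paper does not prove this theorem---it is stated as a cited known result (Minami, ref.~\onlinecite{M1996}), and nothing in the paper depends on a re-derivation of it. So there is no in-paper proof to compare against; what you have written is a reconstruction of the literature proof.

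Your three-step outline (Chebyshev reduction, $2\times 2$ determinant identity, rank-two spectral averaging) is correct, and your closing observation---that two successive applications of the rank-one Wegner estimate would only produce one factor of $|J|$ because the conditional integrand is not again linear in a single $V$, so the determinant structure is essential---is exactly the right thing to emphasize. The algebraic identity $N_J(N_J-1)=\sum_{x\neq y}\det\bigl(P_{H_\Lambda}(J)\big|_{\{x,y\}}\bigr)$ is correct (it is just $(\tr P)^2-\tr P^2$ with $P=P^2$, and the diagonal $x=y$ terms vanish), and the non-negativity of each summand from $P\geq 0$ is what makes the subsequent term-by-term bound legitimate.

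Two caveats worth recording. First, what you describe is not really Minami's original argument but the streamlined version that emerged later (Bellissard--Hislop--Stolz, Graf--Vaghi, Combes--Germinet--Klein): Minami himself works entirely at the level of $\mathrm{Im}(H_\Lambda-E-i\eta)^{-1}$ and a smeared eigenvalue count, never forming $\det\bigl(P(J)\big|_{\{x,y\}}\bigr)$ directly; the passage from the Green's-function determinant bound to the spectral-projection determinant bound is not a naive interchange of $\det$ and $\int$ (Minkowski's determinant inequality runs the wrong way), which is precisely why the later papers reorganize the argument. Second, the ``Jacobian of the map $(V_x,V_y)\mapsto$ eigenvalue pair'' picture you invoke for the rank-two lemma is a heuristic, not the actual proof. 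What Minami proves is an algebraic identity: after a Schur complement, $G(z)\big|_{\{x,y\}}=\bigl(\mathrm{diag}(V_x,V_y)+A(z)-z\bigr)^{-1}$ with $A(z)$ independent of $V_x,V_y$ and $\mathrm{Im}\,A(z)\geq 0$, and for a $2\times 2$ matrix $M$ one has $\det\bigl(\mathrm{Im}(M^{-1})\bigr)=\det\bigl(-\mathrm{Im}\,M\bigr)/\lvert\det M\rvert^2$; the two $V$-integrations are then two one-dimensional residue/arctan computations, each producing a factor $\pi\,\|\rho\|_\infty$, uniformly in $E$ and $\eta$. If you want the proof to stand on its own rather than defer to ref.~\onlinecite{M1996}, that $2\times 2$ identity and the twice-repeated contour estimate are the pieces you would need to supply; as written, \eqref{eq:rank2} is asserted rather than proved, and the sentence motivating it would not survive being made rigorous in the form stated.
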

   
   As an immediate corollary we find a probabilistic  bound on the minimal separation between eigenvalues.  For a finite set $T=  \{t_1,..,t_N\}$, let
             \[  \Delta_{\min} [T] \ := \ \min\{  |t_i - t_j| :   i\neq j    \}. \] 
   Then, Minimi's estimate implies
   \begin{corollary}
         For any $\rho $ with compact support,
       there is a finite  $C$ so that for any $\epsilon > 0$,
       \[        \bbP\left(    \Delta_{\min}[ \sigma (H_\Lambda) ] < \epsilon  \right)     
          \ < \ C    |\Lambda|^2 \epsilon.  \]   
     \end{corollary}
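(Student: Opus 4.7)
The plan is to deduce this from Minami's estimate (Theorem \ref{minami}) by covering the (almost surely bounded) spectrum of $H_\Lambda$ with overlapping short intervals and applying a union bound, converting a quadratic-in-$\epsilon$ bound per interval into a linear-in-$\epsilon$ total bound via the $\epsilon^{-1}$ many intervals needed.

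First I would note that since $\rho$ is compactly supported, there is $M < \infty$ with $\supp \rho \subset [-M,M]$, so with probability one $\|V\|_\infty \le M$, and hence $\sigma(H_\Lambda) \subset [-R, R]$ with $R = M + 2d\abs{\gamma}$ deterministically. Next I would set up a cover: let $N = \lceil R/\epsilon \rceil + 1$ and define, for $k = -N, \ldots, N$, the intervals $J_k = [k\epsilon - \epsilon,\, k\epsilon + \epsilon]$, each of length $2\epsilon$. The key geometric observation is that if two distinct eigenvalues $\lambda, \lambda' \in \sigma(H_\Lambda)$ satisfy $|\lambda - \lambda'| < \epsilon$, then taking $k$ with $|\lambda - k\epsilon| \le \epsilon/2$ we get both $\lambda, \lambda' \in J_k$, so $\tr P_{H_\Lambda}(J_k) \ge 2$.

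With this combinatorial reduction in hand, the proof is a direct union bound. We estimate
\begin{equation*}
\bbP\bigl(\Delta_{\min}[\sigma(H_\Lambda)] < \epsilon\bigr) \ \le \ \sum_{k=-N}^{N} \bbP\bigl(\tr P_{H_\Lambda}(J_k) \ge 2\bigr) \ \le \ (2N+1) \cdot \frac{\pi^2}{2} \bigl(\|\rho\|_\infty \cdot 2\epsilon \cdot |\Lambda|\bigr)^2,
\end{equation*}
using Theorem \ref{minami} applied to each $J_k$. Since $2N+1 \le 2R/\epsilon + 3$, one factor of $\epsilon^2$ is absorbed to give a net $O(\epsilon)$, and we obtain the claimed bound $C |\Lambda|^2 \epsilon$ for a constant $C$ depending on $R, \|\rho\|_\infty$ but not on $\Lambda$ or $\epsilon$.

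There is no real obstacle here; the only subtlety worth checking is the geometric claim that every pair of eigenvalues within $\epsilon$ of each other is captured by one of the $J_k$, which is why the intervals are taken of length $2\epsilon$ centered on an $\epsilon$-spaced grid rather than of length $\epsilon$ on a partition. (One could also simply double-count by using the trivial observation that any $\lambda$ lies in at most two consecutive $J_k$, and proceed identically.) The bound deteriorates for small $\epsilon$ only through the single factor $\epsilon$ on the right, as desired.
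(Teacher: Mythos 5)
Your approach is exactly the paper's: Minami's estimate on a covering family of $O(\epsilon^{-1})$ short intervals, followed by a union bound that trades one factor of $\epsilon^2$ against the $\epsilon^{-1}$ cardinality of the cover, with the compact support of $\rho$ supplying a deterministic a priori bound on $\sigma(H_\Lambda)$. The paper's remark is a one-sentence sketch; your write-up correctly fills in the details, including noticing that the intervals must overlap.

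One small slip in the geometric step: if $|\lambda-\lambda'|<\epsilon$ and you pick $k$ with $|\lambda - k\epsilon|\le\epsilon/2$, then $|\lambda'-k\epsilon|$ can be as large as $3\epsilon/2$, so $\lambda'$ need not lie in $J_k=[k\epsilon-\epsilon,\,k\epsilon+\epsilon]$ (e.g.\ $\lambda=0.4\epsilon$, $\lambda'=1.3\epsilon$, forcing $k=0$). The claim that \emph{some} $J_k$ captures both points is nevertheless true: choose $k$ with $|(\lambda+\lambda')/2 - k\epsilon|\le\epsilon/2$, so that both $\lambda$ and $\lambda'$ lie within $\epsilon/2+\epsilon/2=\epsilon$ of $k\epsilon$. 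With that correction, the union bound goes through exactly as you wrote it.
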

     \begin{remark} To prove this corollary, note that the spectrum may be covered by $O(\epsilon^{-1})$ intervals of length $\epsilon$ in such a way that if $\Delta_{\min}[ \sigma (H_\Lambda) ] <\epsilon$ then at least one of these intervals contains at least two eigenvalues.\end{remark}
   
   By a Borel-Cantelli argument we can control the minimum spacing for a sequence of volumes. 

   \begin{corollary}\label{minami2} Let $(u_k)_{k= 1}^\infty $ be a sequence of sites $u_k\in \bbZ^d$ and
    $(L_k)_{k = 1}^\infty$ a sequence of length scales  with $L_k \to \infty$.  Suppose that $\Lambda_k \cap \Lambda_{k'}=\emptyset$ for $k\neq k'$, where     $\Lambda_k  = \Lambda_{L_k} (u_k) $. Then there is a set $\Omega_{BC} \subset \Omega$ of measure 1 so that, for infinitely many $k \in \bbN$ 
         \be \label{BCsep}
          \Delta_{\min}[\sigma (H_{\Lambda_k}) ] \ > \ \frac{1}{L_k^{2d +1}}   .
         \ee
   \end{corollary}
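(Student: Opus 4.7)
The plan is to reduce the statement to the second Borel--Cantelli lemma applied to the family of single-box spacing events. For each $k$, set
\[ A_k \ := \ \left\{ \Delta_{\min}[\sigma(H_{\Lambda_k})] > L_k^{-(2d+1)} \right\}, \]
so that Corollary \ref{minami2} is exactly the assertion $\bbP(\limsup_k A_k) = 1$. The preceding corollary, applied with $\epsilon = L_k^{-(2d+1)}$ and $|\Lambda_k| \le (2L_k+1)^d$, gives the bound
\[ \bbP(A_k^c) \ \le \ C|\Lambda_k|^2 L_k^{-(2d+1)} \ \le \ C' L_k^{-1}. \]

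Since the hypothesis is only $L_k \to \infty$, with no prescribed rate, the series $\sum_k \bbP(A_k^c)$ need not converge; the first Borel--Cantelli lemma is therefore unavailable in the stated generality. The bound above does, however, give $\bbP(A_k) \ge 1 - C' L_k^{-1} \to 1$, so $\sum_k \bbP(A_k) = \infty$ trivially, which is exactly the input required by the converse Borel--Cantelli lemma.

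The one point that really requires attention is independence of the family $\{A_k\}_{k \ge 1}$. Because $H_{\Lambda_k} = P_{\Lambda_k}(\gamma\Delta + V)P_{\Lambda_k}$ is the Dirichlet restriction, its spectrum \tem\ and hence the indicator $\mathbbm{1}_{A_k}$ \tem\ is measurable with respect to the $\sigma$-algebra generated by $\{V_x : x \in \Lambda_k\}$. The i.i.d.\ hypothesis on the potential, combined with the disjointness $\Lambda_k \cap \Lambda_{k'} = \emptyset$ for $k \neq k'$ guaranteed in the statement, then gives mutual independence of $\{A_k\}$. The second Borel--Cantelli lemma now yields $\bbP(\limsup_k A_k) = 1$, which is the desired conclusion.

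The main (and really only) obstacle is recognizing that independence is needed to rescue the argument from non-summability of $\bbP(A_k^c)$; once this is noticed, everything else is a direct application of the preceding Minami spacing estimate. I would present the argument in the order above: define $A_k$, quote the spacing bound, point out that $\bbP(A_k)\to 1$, observe disjointness gives independence, and invoke the converse Borel--Cantelli lemma.
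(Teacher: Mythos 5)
Your proposal is correct and follows essentially the same route as the paper: apply the Minami spacing estimate with $\epsilon_k = L_k^{-(2d+1)}$ to get $\bbP(A_k) > 1 - C L_k^{-1}$, then invoke the second Borel--Cantelli lemma using independence of the events $A_k$, which holds because each $H_{\Lambda_k}$ depends only on the potential values inside the disjoint box $\Lambda_k$. The only difference is cosmetic: the paper simply asserts independence, whereas you spell out the $\sigma$-algebra argument behind it, which is a reasonable thing to make explicit.
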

     \begin{proof}
        Let $\epsilon_k= L_k^{-2d-1}$. The events $\{ \Delta_{\min}[\sigma (H_{\Lambda_k})]  > \epsilon_k \}$ are independent,
         and obey
           \[  \bbP\left(   \Delta_{\min}[\sigma (H_{\Lambda_k})]  > \epsilon_k  \right)  > 1 -   C L_k^{ - 1},    \]
               In particular,   $\bbP\left(   \Delta_{\min}[\sigma (H_{\Lambda_k})] \right)  \to 1 $, 
            so that the sum of the probabilities over $k$ is infinite. 
        Thus,   the corollary follows from the second Borel-Cantelli theorem.
    \end{proof}

   \subsection{Matching Eigenbases}\label{match}

	We  now combine the Minami estimate with SULE localization bounds to prove that eigenvalue labelings are stable under  rank one perturbations and prove Theorem \ref{infinite matching}. We will use the following lemma in the proof.  Recall that $\Sigma_{\Lambda}^{(g)}$ denotes the local spectrum of $H_g$, namely the set of eigenvalues $\lambda_i^{(g)}$ with  corresponding localization center $x_i^{(g)} \in \Lambda$. 
	
	In the following, we will work with the sequence $(u_k)_{k=0}^\infty$ and $(L_k)_{k=0}^\infty$ defined in the paragraph preceding Theorem \ref{infinite matching}.  For fixed small $\beta > 0$, let us define 
  \[ \Lambda_k^\pm \ = \ \Lambda_{(1 \pm \beta) L_k }(u_k), \] 
analogous to $\Lambda^\pm  $ in Lemma \ref{localsum}.
       \begin{lemma} \label{matching} Let $\mc{G} \subset \bbR$ be a finite set. For all $\omega \in \Omega_{\mathrm{SULE}}^{\mc{G}} = \cap_{g\in \mc{G}} \Omega_{\mathrm{SULE}}^{(g)}$ there is a finite $k_\omega$ so that for all $k > k_\omega$ the following holds:
        \begin{enumerate}
            \item 
               For every $k > k_\omega$ and $g \in \mc{G}$, there is a map
       \[
         \Psi_k^{(g)} : \Sigma_{\Lambda_k}^{(g)} \ \rightarrow \ \sigma( H_{\Lambda_k^+} ) \]                so that  $ | \lambda_i^{(g)} - \Psi_k^{(g)}(\lambda_i^{(g)})    | < \e^{- \frac14 \xi \beta L_k  }.$         
        \end{enumerate}
    Moreover,  if $ H_{ \Lambda^+_k }$ has simple spectrum satisfying $\Delta_{\min}[\sigma ( H_{ \Lambda_k^+  } ) ]>  L_k^{- 2d-1}$, then the following hold for large enough $k$:
            \begin{enumerate}[resume]
    \item
     For $g',g''\in \mc{G}$, suppose there are $i\in I^{(g')}_{\Lambda_k}$ and $j \in I^{(g'')}_{\Lambda_k}$ so that,
   \be\label{psimatch}
      \Psi_k^{(g')}(\lambda_i^{(g')})  
        =  \Psi_k^{(g'')}(\lambda_j^{(g'')}),   \ee 
    then the eigenvectors $\phi_i^{(g)}, \phi_j^{(g'')} $ are $\e^{-\frac14 \xi \beta L_k} $-corresponding.
     \item       For any  $g \in \mc{G}$,  the map $\Psi_k^{(g)}$ is one-to-one.
     \item For any $g\in \mc{G}$, there is a minimum separation for the local spectrum at $\Lambda_k$:  
     \[    \Delta_{\min}  \left[ \Sigma_{\Lambda_k}^{(g)} \right] > \frac12
                L_k^{- 2d-1}. \]
   \item For any $\alpha < 1 $, large enough $k$ and  $g',g'' \in \mc{G}$,
         \be \label{matchedpairs} 
       |I^{(g',g'')}_{\Lambda_k}(  \e^{- \frac14 \xi \beta L_k  } )| \geq 2\alpha |\Lambda_k^-| - |\Lambda_k^+| .
        \ee
              \end{enumerate} 
      \end{lemma}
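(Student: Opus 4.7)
The plan is to interpret each eigenfunction $\phi_i^{(g)}$ of $H_g$ with localization center $x_i^{(g)}\in\Lambda_k$ as an approximate eigenfunction of the truncated operator $H_{\Lambda_k^+}$. By choosing $\beta$ small we may assume $\supp\zeta\cap\Lambda_k^+=\emptyset$, so $H_g$ and $H$ agree on $\Lambda_k^+$ and $P_{\Lambda_k^+}(H-\lambda_i^{(g)})\phi_i^{(g)}=0$. Subtracting off the tail gives
\[
(H_{\Lambda_k^+}-\lambda_i^{(g)})P_{\Lambda_k^+}\phi_i^{(g)} \ = \ -\gamma P_{\Lambda_k^+}\Delta(1-P_{\Lambda_k^+})\phi_i^{(g)},
\]
and the right-hand side involves values of $\phi_i^{(g)}$ only at the $O(L_k^{d-1})$ sites just outside $\partial\Lambda_k^+$, each at distance $\ge\beta L_k$ from $x_i^{(g)}$. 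The SULE bound (\ref{suledecay}) combined with $|x_i^{(g)}|\le|u_k|+L_k$ yields $\|(H_{\Lambda_k^+}-\lambda_i^{(g)})P_{\Lambda_k^+}\phi_i^{(g)}\|\le\e^{-3\xi\beta L_k/4}$ for $k$ large enough depending on $\omega$ (the polynomial prefactors in $L_k$ are absorbed by a small fraction of the exponent). The same SULE tail gives $\|P_{\Lambda_k^+}\phi_i^{(g)}\|\ge 1-\e^{-\xi\beta L_k/2}$, so the standard bound $\dist(\lambda,\sigma(A))\le\|(A-\lambda)\psi\|/\|\psi\|$ produces an eigenvalue of $H_{\Lambda_k^+}$ within $\e^{-\xi\beta L_k/4}$ of $\lambda_i^{(g)}$, which we take as $\Psi_k^{(g)}(\lambda_i^{(g)})$. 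This proves Part 1.

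Now assume the hypothesis $\Delta_{\min}[\sigma(H_{\Lambda_k^+})]>L_k^{-2d-1}$, and let $\mu=\Psi_k^{(g)}(\lambda_i^{(g)})$ with normalized eigenvector $\psi_\mu$ in $\ell^2(\Lambda_k^+)$. For $k$ large the next closest eigenvalue of $H_{\Lambda_k^+}$ is at distance $\ge\frac12 L_k^{-2d-1}$ from $\lambda_i^{(g)}$; expanding $P_{\Lambda_k^+}\phi_i^{(g)}$ in the eigenbasis of $H_{\Lambda_k^+}$ and applying the spectral-gap inequality produces
\[
\|P_{\Lambda_k^+}\phi_i^{(g)}-\langle\psi_\mu,P_{\Lambda_k^+}\phi_i^{(g)}\rangle\psi_\mu\| \ \le \ 2 L_k^{2d+1}\,\e^{-3\xi\beta L_k/4} \ \le \ \e^{-\xi\beta L_k/2}
\]
for $k$ large. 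Combined with the SULE tail bound on $(1-P_{\Lambda_k^+})\phi_i^{(g)}$, $\phi_i^{(g)}$ itself is this close to $\pm\psi_\mu$ after a sign choice. Part 2 follows: if $\Psi_k^{(g')}(\lambda_i^{(g')})=\Psi_k^{(g'')}(\lambda_j^{(g'')})=\mu$, then both $\phi_i^{(g')}$ and $\phi_j^{(g'')}$ approximate $\pm\psi_\mu$, and after phase adjustment (as permitted in the definition of $\epsilon$-correspondence) $|\langle\phi_i^{(g')},\phi_j^{(g'')}\rangle|>1-\e^{-\xi\beta L_k/4}$, while the triangle inequality through $\mu$ gives $|\lambda_i^{(g')}-\lambda_j^{(g'')}|\le 2\e^{-\xi\beta L_k/4}$. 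Part 3 is then immediate: two distinct eigenvectors of the self-adjoint operator $H_g$ are orthogonal and hence cannot be $\epsilon$-corresponding for $\epsilon<1$, forcing $\Psi_k^{(g)}$ to be injective on $\Sigma_{\Lambda_k}^{(g)}$.

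For Part 4, distinct $i,j\in I^{(g)}_{\Lambda_k}$ yield via injectivity two distinct eigenvalues of $H_{\Lambda_k^+}$ separated by at least $L_k^{-2d-1}$, and the triangle inequality with the $\e^{-\xi\beta L_k/4}$ approximation error produces $|\lambda_i^{(g)}-\lambda_j^{(g)}|>\frac12 L_k^{-2d-1}$ for large $k$. For Part 5, Lemma \ref{localsum} applied with $\ell=\beta L_k$ (which exceeds $C_\omega+\log^p(|u_k|+L_k)$ for $k$ large, since $L_k=3^{k-1}|u_0|$ grows geometrically) gives $|I^{(g')}_{\Lambda_k}|,|I^{(g'')}_{\Lambda_k}|\ge\alpha|\Lambda_k^-|$. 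The injective maps $\Psi_k^{(g')}$, $\Psi_k^{(g'')}$ send these index sets into $\sigma(H_{\Lambda_k^+})$ of cardinality $\le|\Lambda_k^+|$, so by inclusion-exclusion their images overlap in at least $2\alpha|\Lambda_k^-|-|\Lambda_k^+|$ common eigenvalues $\mu$, and each coincidence produces via Part 2 a distinct $(\e^{-\xi\beta L_k/4},\Lambda_k)$-corresponding pair. The main technical obstacle is tracking the polynomial-in-$L_k$ prefactors from SULE together with the spectral gap denominator $L_k^{2d+1}$ without destroying the exponential rate; this is why the approximate-eigenvector error starts at $\e^{-3\xi\beta L_k/4}$ but only $\e^{-\xi\beta L_k/4}$ survives in the final statement.
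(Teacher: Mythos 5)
Your proof is correct and follows essentially the same strategy as the paper's: you build $\Psi_k^{(g)}$ by showing the normalized restriction $P_{\Lambda_k^+}\phi_i^{(g)}$ is an approximate eigenvector of $H_{\Lambda_k^+}$ (using SULE tails and the remark that $H_g$ and $H$ agree on $\Lambda_k^+$ since $\supp\zeta$ is disjoint), then use the assumed spectral gap $L_k^{-2d-1}$ to pin the restriction to a single eigenvector $\psi_\mu$, from which parts 2--4 follow by triangle inequalities and orthogonality, and part 5 by the same inclusion--exclusion count via Lemma \ref{localsum}. The only cosmetic difference is that where you bound the off-$\psi_\mu$ component directly by the spectral-gap inequality $\|(H_{\Lambda_k^+}-\lambda_i^{(g)})v\|\ge \tfrac12 L_k^{-2d-1}\|(1-\ket{\psi_\mu}\bra{\psi_\mu})v\|$, the paper phrases the identical estimate via the spectral decomposition of $(H_{\Lambda_k^+}-\eta)^2$ and its eigenvalue ordering $0=\eta_0\le\eta_1\le\cdots$; these give the same conclusion up to constants.
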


     Before proving the lemma, let us show how it implies Theorem \ref{infinite matching}.
   
      \begin{proof}[Proof of Theorem \ref{infinite matching}]  Let $\epsilon_k =   L_k^{ -2 d -1} $.
      Then Corollary \ref{minami2} implies there is a measure one set $\Omega_{BC} \subset \Omega $ so that for all $\omega \in \Omega_{BC}$ there are infinitely many $k \in \bbN $ for which
              $\Delta_{\min}[\sigma (H_{\Lambda_k}) ] > \epsilon_k$.
      Let $k_\omega$ be as defined in Lemma \ref{matching}, and for any  $\omega \in \Omega_{BC } \cap \Omega_{\mathrm{SULE}} $, 
           let
  \[ \cK_\omega \ = \ \setb{  k \geqq k_\omega}{\Delta_{\min}[\sigma (H_{\Lambda_k}) ] > \epsilon_k }.  \]
          We apply Lemma \ref{matching}, for sufficiently large $k \in \cK_\omega$. From part 5, there are \[ 2 \alpha|\Lambda^-_k| - |\Lambda^+_k| \ = \ C_d \left ( 2 \alpha (1-\beta)^{d} - (1+\beta)^d \right ) L_k^d \geq \  \frac{1}{2} |\Lambda^-_k|  \ > \ 0 \] many $\e^{-  \frac14 \xi \beta L_k }$-corresponding pairs of eigenpairs for a suitable choice of $\alpha<1$ and $\beta>0$. The result follows with $\epsilon = \frac14 \xi \beta$.   
     \end{proof}

   Finally we prove Lemma \ref{matching}
     \begin{proof}[Proof of Lemma \ref{matching}]  Fix $1<p<2$, $\alpha <1$, and let $C_\omega =\max_{g\in \mc{G}} C_\omega(p,\alpha,g)$, with $C_\omega(g)$ as in Lemma \ref{localsum}. Let $k_\omega$ be chosen so that  $\beta L_k \ge C_\omega + \log^p L_k^+$ for $k > k_\omega$.
     Then for $g\in \mc{G}$ and $i \in I^{(g)}_{\Lambda_k}$, we have       $  \|(1 - P_{ \Lambda_k^+ }) 
         \phi_i^{(g)} \|  <  \e^{-\frac12 \xi  \beta L_k } ,$ by  (\ref{localC}). Increasing $k_\omega$ if necessary, we shall assume that $ \e^{-\frac12 \xi \beta L_k} < \frac{1}{\max(2,4\gamma^2 d)}$ for $k>k_\omega$. 
        
        Thus for $g\in \mc{G}$ and $i\in I^{(g)}_{\Lambda_k}$ we have 
        \[ \norm{P_{\Lambda_k^+} \phi_i^{(g)}} \ > \ \frac12 .  \]
        Let
        \[ (\phi_i^{(g)})_{ \Lambda_k^+ } \ := \ \frac{1}{\norm{P_{\Lambda_k^+} \phi_i^{(g)}} } P_{\Lambda_k^+} \phi_i^{(g)} \]
        denote the normalized restriction of $\phi_i^{(g)}$ to $\Lambda_k^+$ and set
     	\[  R^{(g)}_i \ := \  \left( H_{\Lambda_k^+ }  
                        - \lambda^{(g)}_i \right) (\phi_i^{(g)})_{ \Lambda_k^+ }
             \ = \   - \frac{\gamma}{\|P_{ \Lambda_k^+ } \phi_i^{(g)}\|} 
              \sum_{x \in \partial \Lambda_k^+ } | x \rangle \sum_{y \sim x: y \notin \Lambda_k^+ } \phi_i^{(g)}(y) .  \]
        We again use 
        (\ref{localC}), to bound the norm of the remainder,
           \[  \|R^{(g)}_i\|^2
             \leq 4 d \gamma^2 \| (1 - P_{\Lambda_k^+})  \phi_i^{(g)}\|^2
                \ \leq \  4 \gamma^2 d \e^{- \xi \beta L_k } \ \leq \ \e^{- \half \xi \beta L_k }. \]
         On the other hand, observe that
           \be \|R^{(g)}_i\|^2  \ = \ \abs{\dirac{ (\phi_i^{(g)})_{ \Lambda_k^+ } }{    (H_{ \Lambda_k^+ } - \lambda_i^{(g)})^2  }{  (\phi_i^{(g)})_{ \Lambda_k^+ } }}.
              \label{finiteboxapprx}  
              \ee     
      By the min-max theorem applied to $(H_{ \Lambda_k^+ } - \lambda_i^{(g)})^2 $, we see that there is an eigenvalue $\eta \in \sigma(H_{ \Lambda_k^+ })$, with an associated eigenvector $\phi \in \ell^2( \Lambda_k^+)$,    so that
      $|\eta - \lambda_i^{(g)}| < \e^{- \frac14 \xi \beta L_k  }  $. This shows the existence of $\Psi_k^{(g)}$, and proves part 1.

   With $\eta$  as above,  enumerate the eigenvalues of $(H_{ \Lambda_k^+ } - \eta)^2$ as $0 = \eta_0\leq \eta_1 \leq \cdots \leq \eta_N$. Let $\overline{P}_j$ project to the eigenspace of $(H_{ \Lambda_k^+ } - \eta)^2$ associated to $\eta_j$.
             From the calculation in part 1 we have,    
            \be \label{match2e1}
                   \dirac{    (\phi_i^{(g)})_{ \Lambda_k^+ }  }{ (H_{\Lambda_k^+ } - \eta)^2 }{(\phi_i^{(g)})_{\Lambda_k^+ }  }
             \ = \  \norm{   (  \lambda_i^{(g)} - \eta ) (\phi_i^{(g)})_{\Lambda_k^+ } + R_i^{(g)} }^2  \ \leq \  4 e^{ -\frac12 \xi \beta L_k } .   
              \ee
     On the other hand we have,
          \begin{multline}\label{match2e2}
          \dirac{    (\phi_i^{(g)})_{ \Lambda_k^+ }  }{ (H_{\Lambda_k^+ } - \eta)^2 }{(\phi_i^{(g)})_{\Lambda_k^+ }  }    
        \ =  \  \sum_{j = 1}^\infty \eta_j   \dirac{     (\phi_i)_{\Lambda_k^+ }  }{ \overline{P}_j }{(\phi_i)_{\Lambda_k^+ } } \\
        \geq  \         \eta_1 \dirac{   (\phi_i^{(g)})_{\Lambda_k^+ } }{ ( 1 -  \overline{P}_0 ) }{(\phi_i^{(g)})_{\Lambda_k^+ }} . 
              \end{multline}
     Combining (\ref{match2e1}) and (\ref{match2e2}), we obtain 
            \be \label{lowerbound1}
            \dirac{   (\phi_i^{(g)})_{\Lambda_k^+ } }{ \overline{P}_0 }{(\phi_i^{(g)})_{\Lambda_k^+ }} \ \geq \ 
   1 - \frac{ 4 \e^{ - \xi \beta L_k/ 2}}{\eta_1}.  
              \ee  
     
     We will obtain both conclusions 2 and 3 of the lemma from equation (\ref{lowerbound1}).
     Suppose that    $\sigma ( H_{\Lambda_k^+ } ) $ is simple and   $  \Delta_{\min} [  \sigma ( H_{\Lambda_k^+ } ) ]> L_k^{- 2d -1} $. Under this assumption, $\eta_1 > L_k^{- 4d -2} $ and $\overline{P}_0 = \ket{\phi}\bra{\phi}$ where $\phi$ is the eigenvector for $H_{\Lambda_k^+}$ with eigenvalue $\eta$. 
     
     Suppose that there are  indices $i\in I^{(g')}_{\Lambda_k}  $ and $j \in I^{(g'')}_{\Lambda_k}$  so that (\ref{psimatch}) holds. Let $ \eta = \Psi_k^{(g')}(\lambda_i^{(g')}) = \Psi_k^{(g'')}(\lambda_j^{(g'')})  $ and let $\phi $ be the eigenvector associated to $ \eta$ in $ H_{\Lambda_k}$. By (\ref{lowerbound1}), for $ ( l , g) = (i,g')$ or $=(j,g'') $ we have     \[
   \ket{(\phi_l^{(g) })_{\Lambda_k^+ }}
        \ = \ \diracip{ \phi }{ (\phi_l^{(g) })_{\Lambda_k^+ }}    \ \ket{ \phi } + \ket{  \hat \phi_l}
     \] 
           where $\|\hat \phi_l\|^2 \leq    \frac{4}{\eta_1}  \e^{-  \frac12\xi \beta L_k}  $ and $\langle \phi | \hat  \phi_l \rangle = 0$.
    It follows that 
            \[
            	 \left| \diracip{( \phi_i^{(g') } )_{\Lambda_k^+} }{ (\phi_j^{(g'')} )_{\Lambda_k^+ }} \right| \
                       \geq \
             \left|  \diracip{ \phi }{ (\phi_i^{(g') })_{\Lambda_k}}
                     \diracip{ \phi }{ (\phi_j^{(g'') })_{\Lambda_k}}  \right|
                                       -   \abs{ \diracip{ \hat \phi_i }{ \hat  \phi_j}}
             \          \geq   \
                                     1 - 8 L_k^{4d+2}  e^{- \frac12 \xi \beta L_k }.  
            \]
     Thus the pair $ \phi_{ i}^{( g')}, \phi_{ j}^{(g'')}  $ is $8 L_k^{4d+ 2} e^{-\frac12 \xi \beta L_k}$-corresponding so that part 2 holds.
          
    Applying part 2 with $g'=g''$, we see that if we were to have $\Psi_k^{(g)}(\lambda_i^{(g)}) = \Psi_k^{(g)}(\lambda_j^{(g)})$ for $i\neq j$ then it would happen that
    \[ \diracip{\phi_i^{(g)}}{\phi_j^{(g)}} \ \ge \ 1- \e^{-\frac14 \xi \beta L_k}, \]
    contradicting the fact that $\diracip{\phi_i^{(g)}}{\phi_j^{(g)}}=0$.  Thus $\Psi_k^{(g)}$ is injective for large $k$ and part 3 holds.
          The minimum separation claimed in part 4 follows since   
          \begin{multline*}
          	| \lambda_i^{(g)} -   \lambda_j^{(g)}  |  \ \ge \ \abs{\Psi_k^{(g)}(\lambda_i^{(g)})-\Psi_k^{(g)}(\lambda_j^{(g)})} - \abs{\lambda_i^{(g)} -\Psi_k^{(g)}(\lambda_i^{(g)})} - \abs{\lambda_j^{(g)} -\Psi_k^{(g)}(\lambda_j^{(g)})}
          	\\ \ge \   L_k^{- 2d -1} - 2 \e^{ -\frac14 \xi \beta L_k },
          \end{multline*}  
           by parts 1 and 3 and the minimum separation for $\sigma(H_{\Lambda_k^+} )$.      
		 
		 Finally, we prove part 5. There are $  |I_{\Lambda_k}^{(g)} |$ elements in the range of $ \Psi_k^{(g)} $ for $g = g',g''$. Since $\Psi^{(g)}_k$ is an injective map into $ \sigma(H_{\Lambda_k^+}) $, 
               \[ 
               |  \rng \Psi_k^{(g')} \cap \rng\Psi_k^{(g'')} | \ \geq \ |I_{\Lambda_k}^{(g')} | +  |I_{\Lambda_k}^{(g'')} |- |\sigma(H_{\Lambda_k^+})| \ \ge \ 2\alpha |\Lambda_k^-| - |\Lambda_k^+| ,
                \]
         by \eqref{locallb}. To complete the proof, note that part 2 implies that every element in $ \rng \Psi_k^{(g')} \cap \rng\Psi_k^{(g'')} $ corresponds to a    $ \e^{-\frac14 \xi \beta L_k}$-corresponding pair of eigenvectors.    
     \end{proof}

   \appendix

    \section{Fractional moments and Dynamical localization}\label{app1}
     In this section, we recall the fractional moment bounds for the Anderson model
     and the related dynamical localization results. Large disorder localization for the Green's function follows from the following
    \begin{theorem}[See ref. \onlinecite{AW2015} (Theorem 6.3)]\label{fmsaw}
        There is a finite constant $C = C_{\rho, d}$ so that for any $ 0 < s < 1$ and any $z \in \bbC \setminus \bbR$
        \[  \bbE (  | \langle x |   (H - z)^{-1}  | y \rangle |^s ) \   \leq \ \frac{C}{ 1- s}  \left( \frac{\gamma^s C}{ 1 - s} \right)^{|x - y|}  \]
    \end{theorem}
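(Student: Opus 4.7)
The proof strategy, following the Aizenman-Molchanov approach as presented in Chapter 6 of ref.\ \onlinecite{AW2015}, rests on two ingredients: a uniform \emph{a priori} fractional moment bound independent of the hopping $\gamma$, and a one-step decoupling estimate that, when iterated along a nearest-neighbor path from $x$ to $y$, contributes a factor of $\gamma^s C/(1-s)$ at each step. I would establish these in turn and then iterate $|x-y|$ times.

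For the a priori bound, I would freeze the potentials $\{V_w\}_{w\neq x}$ and view $G_z(x,y) = \dirac{x}{(H-z)^{-1}}{y}$ as a function of $V_x$ alone. The rank-one perturbation formula furnishes the exact representation
\[ G_z(x,y) \ = \ \frac{B(z,y)}{V_x - z - A(z)},\]
with $A(z) = \gamma^2 \sum_{u,v\sim x} G_z^{(x)}(u,v)$ and $B(z,y) = \gamma \sum_{u\sim x} G_z^{(x)}(u,y)$, where $G_z^{(x)}$ is the Green's function of the operator $H^{(x)}$ obtained from $H$ by severing the hopping between $x$ and its neighbors. Crucially, both $A$ and $B$ are independent of $V_x$. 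Integrating against $\rho(V_x)\, dV_x$ then reduces the estimate to the classical inequality
\[ \int_{\bbR} \rho(v)|v-w|^{-s}\, dv \ \leq \ \frac{C_\rho \|\rho\|_\infty}{1-s}, \]
uniform in $w\in \bbC$, which holds because $\rho$ is bounded and compactly supported. This yields $\bbE|G_z(x,y)|^s \leq C/(1-s)$.

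The same representation immediately supplies the one-step decoupling: since $|B(z,y)|^s \leq \gamma^s \sum_{u\sim x}|G_z^{(x)}(u,y)|^s$ by subadditivity of $t\mapsto t^s$, and $G_z^{(x)}$ is independent of $V_x$, integrating over $V_x$ first gives
\[ \bbE|G_z(x,y)|^s \ \leq \ \frac{C\gamma^s}{1-s} \sum_{u\sim x} \bbE\,|G_z^{(x)}(u,y)|^s.\]
Since $H^{(x)}$ is itself a random Schr\"odinger operator with the same single-site distribution $\rho$ (on the remaining lattice), the a priori bound and the one-step decoupling apply to it \emph{verbatim}, with the same constants.

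Finally I would iterate along a shortest nearest-neighbor path $x = x_0,\ldots,x_n = y$ of length $n = |x-y|$. Each of the $n$ decoupling steps contributes $\gamma^s \cdot 2d \cdot C/(1-s)$ (the factor $2d$ bounding the sum over neighbors of the current root), and the final a priori estimate applied to the Green's function of $H^{(x_0,\ldots,x_{n-1})}$ contributes $C/(1-s)$. Absorbing the $2d$ into the $d$-dependent constant $C = C_{\rho,d}$ produces the asserted geometric bound. The main obstacle is the bookkeeping: one must verify that after successive disconnection of sites $x_0,\ldots,x_k$ the relevant a priori bound still applies with constants depending only on $\rho$ and $d$. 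This holds because the rank-one M\"obius structure of the resolvent at any single site $x_{k+1}$ is universal, so the key inequality $\int \rho(v)|v-w|^{-s}\, dv \leq C_\rho \|\rho\|_\infty/(1-s)$ is insensitive to what has already been disconnected elsewhere on the lattice.
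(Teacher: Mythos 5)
The paper does not prove this statement: it is quoted verbatim from Aizenman--Warzel's book (their Theorem 6.3), so there is no in-paper argument to compare against. Your proposal is a correct rendering of the standard Aizenman--Molchanov argument presented there: a Schur-complement/Krein representation of $G_z$ as a M\"obius function of $V_x$ yields both the uniform \emph{a priori} bound $\bbE|G_z(x,y)|^s\le C/(1-s)$ and the one-step decoupling, and iterating along a self-avoiding-walk expansion with a factor $2d$ per step gives the geometric decay after $|x-y|$ iterations, with the $2d$ absorbed into $C_{\rho,d}$. One small correction: compact support of $\rho$ is neither assumed in the paper nor needed for the key inequality $\int \rho(v)\,|v-w|^{-s}\,dv\le C/(1-s)$; boundedness of the density together with $\int\rho=1$ suffices, by splitting the integral at $|v-w|=1$.
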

    For the Anderson model,  transition probabilities may be bounded in terms  of fractional moments. 
    \begin{theorem}[See ref.\ \onlinecite{AW2015} (Theorem 7.7)] \label{fm db 2}
        There is $C   < \infty$ depending on $s$ so that,
     \[         \bbE ( | \langle x | e^{ -  it H }  | y \rangle |   )   
             \   \leq  \  C   \liminf_{\epsilon \searrow 0}   \int_I   \bbE (  | \langle x |   (H - ( E + i\epsilon ) )^{-1}  | y \rangle |^s ) dE    \]
     \end{theorem}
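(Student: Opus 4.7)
The plan is to reduce to the eigenfunction correlator and then apply an Aizenman--Molchanov rank-one decoupling. Reading the statement with the spectral projection $P_I(H)$ implicit on the left, I would first note that in the localized regime $H$ has almost surely pure point spectrum, so writing $e^{-itH}P_I(H)=\sum_{E_n\in I}e^{-itE_n}\ket{\psi_n}\bra{\psi_n}$ and applying the triangle inequality yields the deterministic bound
\[
\sup_{t\in\bbR}\bigl|\langle x|e^{-itH}P_I(H)|y\rangle\bigr|\ \le\ \sum_{E_n\in I}|\psi_n(x)\psi_n(y)|\ =: \ Q_I(x,y).
\]
The task is then to dominate $\bbE[Q_I(x,y)]$ by $\liminf_{\epsilon\searrow 0}\int_I \bbE\bigl[|G_{E+i\epsilon}(x,y)|^s\bigr]\,dE$.

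Second, I would identify $Q_I(x,y)$ with the total variation mass on $I$ of the spectral measure $\mu_{x,y}$ defined by $\langle x|f(H)|y\rangle=\int f\,d\mu_{x,y}$. Via the Borel transform $G_z(x,y)=\int(E-z)^{-1}\,d\mu_{x,y}(E)$ and classical Poltoratskii / de~la~Vall\'ee~Poussin boundary-value theorems, one obtains the deterministic comparison
\[
Q_I(x,y)\ \le \ \frac{1}{\pi}\liminf_{\epsilon\searrow 0}\int_I|\mathrm{Im}\,G_{E+i\epsilon}(x,y)|\,dE,
\]
which is in fact an equality in the pure point regime. Combined with the previous step, the theorem reduces to a pointwise fractional-moment decoupling estimate
\[
\bbE\bigl[|\mathrm{Im}\,G_z(x,y)|\bigr]\ \le \ C_{\rho,s}\,\bbE\bigl[|G_z(x,y)|^s\bigr],\qquad s\in(0,1),
\]
to be integrated over $E\in I$ and passed through Fatou's lemma in $\epsilon$.

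The technical heart of the argument is this decoupling step. Fixing all potentials $\{V_u\}_{u\ne x}$, the rank-one resolvent identity gives
\[
G_z(x,y)\ = \ \frac{G_z^{(x)}(x,y)}{1+V_x\,G_z^{(x)}(x,x)},
\]
where $G^{(x)}$ is the resolvent of $H$ with the $x$-diagonal entry removed; thus $V_x\mapsto G_z(x,y)$ is a M\"obius map whose absolute value has a Cauchy-type distribution under the bounded density $\rho(v)\,dv$. A direct calculation then yields the decoupling inequality with $C_{\rho,s}$ depending only on $\|\rho\|_\infty$ and $s$, uniformly in $E$, $\epsilon$, and the remaining potentials. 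The main obstacle is precisely this step: because $|G_z|$ is generically unbounded in $V_x$ and its $L^1$ average against $\rho$ can diverge near resonances, one cannot simply use H\"older's inequality but must exploit the specific M\"obius structure to extract the $s<1$ gain. Once this is in hand, the rest of the proof is soft analysis: integrate in $E$, average over the remaining potentials, and invoke Fatou's lemma.
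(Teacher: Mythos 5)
Your reduction to the eigenfunction correlator (step 1), the boundary-value representation of $Q_I$ via $\Im G$ (step 2), and the identification of the M\"obius dependence $G_z(x,y)=G_z^{(x)}(x,y)/\bigl(1+V_x\,G_z^{(x)}(x,x)\bigr)$ are all sound. The gap is in step 3: the pointwise decoupling estimate
\[
\bbE\bigl[|\Im G_z(x,y)|\bigr]\ \le\ C_{\rho,s}\,\bbE\bigl[|G_z(x,y)|^s\bigr]
\]
is \emph{false} uniformly in $\epsilon=\Im z$, so it is not a ``direct calculation'' to be filled in but a step that fails. Conditioning on $\{V_u\}_{u\ne x}$ and writing $d=G_z^{(x)}(x,x)$, $c=G_z^{(x)}(x,y)$, one has $G_z(x,y)=c/(1+V_xd)$, and $\min_v|1+vd|=|\Im d|/|d|$. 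The $s$-moment $\int\rho(v)\,|1+vd|^{-s}\,dv$ is bounded by $C_s|d|^{-s}$ uniformly in $\epsilon$, but the first moment $\int\rho(v)\,|1+vd|^{-1}\,dv$ grows like $|d|^{-1}\log\bigl(|d|/|\Im d|\bigr)$ whenever the quasi-pole $-\Re d/|d|^2$ lies in $\supp\rho$. In the pure-point regime $\Im d=\Im G_z^{(x)}(x,x)\to 0$ as $\epsilon\searrow 0$ for Lebesgue-a.e.\ $E$, so $\bbE_{V_x}[|G|]\,/\,\bbE_{V_x}[|G|^s]$ diverges like $\log(1/\epsilon)$ and no $\epsilon$-uniform $C_{\rho,s}$ exists. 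The quantity $\int_I\bbE\bigl[|\Im G_{E+i\epsilon}(x,y)|\bigr]\,dE$ does stay bounded, but only through a cancellation between the $E$-integral and the near-resonant set of $V_x$ (of measure $O(\epsilon)$ for each $E$), which a decoupling performed pointwise in $E$ before integrating cannot detect.

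The route in Aizenman--Warzel sidesteps this entirely and never forms $\bbE[|\Im G|]$. One works in finite volume directly with the correlator $\sum_{E_n\in I}|\psi_n(x)\psi_n(y)|$ and performs the $V_x$-average via rank-one spectral averaging: the perturbation-theoretic Jacobian $\partial E_n/\partial V_x=|\psi_n(x)|^2$ converts the sum over eigenvalues falling in $I$ into an integral $\int_I dE$, while the ratio $|\psi_n(y)/\psi_n(x)|^s$ emerges as a boundary value of $|G(x,y;E)/G(x,x;E)|^s$; after a standard decoupling of $G(x,x)$ this produces $\bbE[|G(x,y;E)|^s]$. The cancellation you need is thereby encoded in the change of variables $V_x\mapsto E_n(V_x)$ rather than attempted as an a priori $L^1$-versus-$L^s$ bound on the resolvent.
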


     From these statements we can easily demonstrate Theorem \ref{fractional moment}
   and Corollary \ref{fm dynamical bound}.
    \begin{proof}[Proof of Theorem \ref{fractional moment}]
    We will only prove the result for sufficiently small $s$.  As explained above, the result follows for all $s <1$ by an ``all-for-one'' lemma such as Lemma B.4 of ref. \onlinecite{Aizenman2001}.    Furthermore, we note that the \emph{a priori} bound
    \be \label{apriori} \Ev{ \abs{\dirac{x,i}{(\fh_g -z)^{-1}}{y,i}}}^s \ \le \ \frac{C}{1-s} \ee 
    holds in this context \tem \ for example, see Lemma 2.5 of ref.\ \onlinecite{MS2016}. 
          
    Let  $ \fD = \ket{\zeta}\bra{\zeta}\otimes \sigma^{(1)}$.
         By the resolvent equation 
        \be \label{determinant}
             (\fh_g - z)^{-1} \ = \ ( \fh_0  - z)^{-1} -  g (\fh_0 - z )^{-1} \fD  (\fh_g  -  z)^{-1},
          \ee
        for $z \in \bbC \setminus \bbR$.
        Two applications of of eq.\ \eqref{determinant} imply
       \begin{multline*}    \dirac{ x,  i }{ (\fh_g - z)^{-1}}{ y ,i} \ = \
                   \dirac{ x}{  ( H  - z)^{-1}}{  y } \\+ 
                  g^2  \langle x  |(H - z )^{-1} |\zeta\rangle
                    \langle \zeta, -i | (\fh_g  -  z)^{-1} |  \zeta,i\rangle   \langle \zeta  |(H - z )^{-1} | y \rangle.
        \end{multline*}
       
       Now we take the $s^{\mathrm{th}}$ moment for $0 < s < \nicefrac{1}{3}$. For the first term we may apply 
       Theorem \ref{fmsaw} directly. For the second, we apply H\"older's inequality to find,
       \begin{align*}     \Ev{|\langle x,  i | (\fh_g - z)^{-1} |  y ,i\rangle|^s }  &  \ \leq \ 
                                  \Ev{|\langle x  | (H - z)^{-1} |  y \rangle|^s}  + \\
               &   C_s g^{2s}  \left [ \Ev{ | \langle x  |(H - z )^{-1} |\zeta\rangle|^{3s}} 
                        (\Ev{ |\langle \zeta  |(H - z )^{-1} | y \rangle|^{3s }} \right ]^{\nicefrac{1}{3}},
        \end{align*}
        where we have used the \emph{a priori} bound \eqref{apriori} to estimate the middle factor. Then we have, using Theorem \ref{fmsaw},
       \begin{align*}  
          \Ev{ |\langle x,  i | (\fh_g - z)^{-1} |  y,i \rangle |^s}
             &  \ \leq  \ C_s \left( \frac{\gamma^s C}{ 1 - s} \right)^{|x - y|} + 
                 g^{2s} C_s
                     \sum_{u,u' \in \supp\zeta} \left( \frac{\gamma^{3s} C}{ 1 - 3s} \right)^{\frac{1}{3} \left ( |x - u|+|y-u'| \right )} \\
              &  \leq  \ \wt C_s   \left( \frac{\gamma^{3s} C}{ 1 - 3s} \right)^{\frac{1}{3} |x - y|},
        \end{align*} 
      concluding the result in this case.

        For the case $j=-i$, note that 
\[        \abs{\dirac{x,i}{(\fh_g - z)^{-1} }{y,j}} \ = \ 
				 \abs{\dirac{y,j}{(\fh_g - z)^{-1} }{x,i}} \ = \
				  \abs{\dirac{x,-i}{(\fh_g - z)^{-1} }{y,-j}}.\]
Thus we may assume without loss of generality that $i=1$, $j=-1$ and $|x|\ge |y|$.
				  Eq.\ (\ref{determinant}) implies
          \[       
              \langle x, 1 | (\fh_g - z)^{-1} |  y , -1 \rangle \ = \ 
               -  g  \langle x |  (H - z )^{-1}  |\zeta   \rangle
        \langle \zeta , -1  | (\fh_g  -  z)^{-1} |  y, -1\rangle
          \]
      Taking the $s$-moment for  $ 0  < s < \nicefrac{1}{2} $ and applying H\"older's inequality yields
      \[
          \Ev{ |\langle x, 1 | (\fh_g - z)^{-1} |  y,-1 \rangle |^s }
               \ \leq \     C_s g^s
                    \left[  \Ev{ | \langle x,1 |  (H - z )^{-1}  |\zeta , 1  \rangle |^{2s  } }\right]^{1/2}
        \] 
        where we have again used the \emph{a priori} bound \eqref{apriori} to estimate the second factor. Since $|x|\ge |y|$, we obtain        
        \begin{align*}  \Ev{\abs{ \dirac{x,1}{(\fh_g - z)^{-1} }{y,2}}^s}
              \ \leq  \ 
                 g^{s}C_s                     \sum_{u \in \supp\zeta}    \left( \frac{\gamma^s C}{ 1 - s} \right)^{|x - u|}  
                \ \leq \ \wt{C}_s   \left( \frac{\gamma^s C}{ 1 - s} \right)^{\frac{1}{2}(|x| +|y|) },
        \end{align*}  
         which concludes the proof of Theorem \ref{fractional moment}  
      \end{proof}
  Now let us prove Corollary \ref{fm dynamical bound}.
    \begin{proof}[Proof of Corollary \ref{fm dynamical bound}.]
        By the perturbation formula for semigroups,
         \[    
                \e^{- \im t\fh_g}  = \e^{-\im t\fh}   - \im g \int_0^t  \e^{- \im (t-s) \fh_g } \fD \e^{ - \im s \fh_0  } ds  .     \]      Thus,
         \begin{multline*}   
                 \Ev{  \abs{\dirac{ y, - 1 }{ \e^{- \im  t\fh_g} }{x , +1\rangle }} }
                                  \leq \ 
			g \int_0^t \Ev{ \abs{ \dirac{ y, - 1  }{ \e^{- \im  (t-s) \fh_g } }{ \zeta, - 1 }}
					\abs{\dirac{ \zeta, + 1 }{  \e^{ - \im s \fh_0  }  }{  x , +1}}}ds  
                                \\
                                \leq  \ 
                              g  t  \ \sup_s   \Ev{ \abs{  \dirac{  \zeta }{  \e^{ - \im s H  }  }{ x  } }}.
      \end{multline*}
       The corollary now follows from  Theorems \ref{fmsaw} and \ref{fm db 2}.
    \end{proof}

\end{document}